\documentclass[a4paper,11pt]{article}
\usepackage[margin=1in]{geometry}

\usepackage[english]{babel}
\usepackage{lmodern}
\usepackage[T1]{fontenc}
\usepackage{amsmath,amsfonts,amsthm}
\usepackage{url}
\usepackage{hyperref}
\usepackage{graphicx}
\usepackage[utf8]{inputenc}
\usepackage{enumerate}
\usepackage{paralist}
\usepackage{comment}
\usepackage[usenames]{color}
\usepackage{authblk}

\RequirePackage{lineno}

\newcommand{\lastcorrections}%
{{
 \begin{sloppypar}
    \baselineskip -0.2in
    \tiny\bf\noindent
last corrections:\\
\end{sloppypar}
}}

\newcommand{\margincomment}[1]%
    {{%
      \marginpar{{\scriptsize\begin{minipage}{0.75in}
                       \begin{flushleft}
                          {#1}
                       \end{flushleft}
                       \end{minipage}
                }}
    }}
\renewcommand{\margincomment}[1]{}

\newcommand{\lref}[2][]{\hyperref[#2]{#1~\ref*{#2}}}



\newcommand{\myparagraph}[1]{{\smallskip\noindent{\bf #1}}}
\newcommand{\mymedparagraph}[1]{{\medskip\noindent{\bf #1}}}

\newcommand{\etal}{{\em et al.}}



\newcommand{\barS}{{\bar S}}

\newcommand{\calA}{{\cal A}}

\newcommand{\calG}{{\cal G}}

\newcommand{\E}{\mathbf{E}}



\newcommand{\braced}[1]{{ \left\{ #1 \right\} }}



\renewcommand{\Pr}{\textrm{Pr}}

\newcommand{\half}{{\mbox{$1\over 2$}}}

\newcommand{\NP}{{\mathbb{NP}}}
\newcommand{\APX}{{\mathbb{APX}}}
\newcommand{\JRP}{{\mbox{JRP}}}
\newcommand{\JRPD}{{\mbox{JRP-D}}}
\newcommand{\JRPL}{{\mbox{JRP-L}}}



\newtheorem{theorem}{Theorem}

\newtheorem{lemma}{Lemma}

		{$\spadesuit$\smallskip}

\newenvironment{bigeqn*}{\large\begin{eqnarray*}}{\end{eqnarray*}}



\newcommand{\emdash}{\hspace{1mm}---\hspace{1mm}}

\newcommand{\OPT}{\textsc{Opt}}




\newcommand{\threehalves}{{\textstyle\frac{3}{2}}}
\newcommand{\onethird}{{\textstyle\frac{1}{3}}}

\newcommand{\twofifths}{{\textstyle\frac{2}{5}}}
\newcommand{\threefifths}{{\textstyle\frac{3}{5}}}

\newcommand{\nineeights}{{\textstyle\frac{9}{8}}}




\newcommand{\COST}{{\sf C}}

\newcommand{\cost}{{\sf c}}



\newcommand{\retailers}{\mathcal{R}}

\newcommand{\coststem}{\textsc{Cost}_\textsc{wship}}
\newcommand{\costbranch}{\textsc{Cost}_\textsc{rship}}
\newcommand{\costship}{\textsc{Cost}_\textsc{ship}}
\newcommand{\costwait}{\textsc{Cost}_\textsc{wait}}
\newcommand{\costtotal}{\textsc{Cost}}
\newcommand{\df}{D}
\newcommand{\I}{\mathcal{\xi}}
\newcommand{\shipment}{\mathrm{shp}}
\newcommand{\ALGOSRP}{1SRP}
\newcommand{\ALGTSRP}{2SRP}
\newcommand{\ALGLPS}{LPS}
\newcommand{\ALGCOMB}{1SRP+LPS}
\newcommand{\ALGFINAL}{2SRP+1SRP+LPS}
\newcommand{\shippace}{G}

\newcommand{\shippaceb}{G_\mathrm{1SRP}}
\newcommand{\shippacec}{G_\mathrm{LPS}}
\newcommand{\shippacebc}{G_\mathrm{1SRP+LPS}}

\newcommand{\dz}{\mathrm{d}z}
\newcommand{\firsttimelp}{\textsf{ft}_{{\mbox{\tiny LP}}}}
\newcommand{\scalefactor}{\zeta}

\newtheorem{observation}[theorem]{Observation}



\title{Better Approximation Bounds \\ for the Joint Replenishment Problem%
\thanks{Research partially supported by NSF grants CCF-1217314
and OISE-1157129, MNiSW grant no.\ N N206 368839, 2010-2013, EU ERC project 259515 PAAl, 
CE-ITI (project P202/12/G061 of GA~\v{C}R), and grant IAA100190902 of GA~AV~\v{C}R.}
}

\author[1]{Marcin Bienkowski}
\author[1]{Jaroslaw Byrka}
\author[2]{Marek Chrobak}
\author[1,3]{{\L}ukasz Je\.{z}}
\author[4]{Ji\v{r}\'{\i} Sgall}
\affil[1]{Institute of Computer Science, University of Wroc{\l}aw, Poland.}
\affil[2]{Department of Computer Science, University of California at Riverside, USA.}
\affil[3]{Department of Computer, Control, and Management Engineering, Sapienza University of Rome, Italy.}
\affil[4]{Computer Science Institute, Faculty of Mathematics and Physics, Charles University, Czech Republic.}


%

\date{}


\pagestyle{plain}

\begin{document}

\maketitle

\begin{abstract}
The Joint Replenishment Problem ($\JRP$) deals with optimizing shipments of goods from a supplier to
retailers through a shared warehouse. Each shipment involves transporting goods from the
supplier to the warehouse, at a fixed cost $\COST$,
followed by a redistribution of these goods from the warehouse to the retailers
that ordered them, where transporting goods to a retailer $\rho$ has a fixed cost $\cost_\rho$.
In addition, retailers incur waiting costs for each order. The objective is to
minimize the overall cost of satisfying all orders, namely the sum of all
shipping and waiting costs.

$\JRP$ has been well studied in Operations Research and, more recently,
in the area of approximation algorithms. For arbitrary waiting cost functions, the best
known approximation ratio is $1.8$. This ratio can be reduced to
$\approx 1.574$ for the $\JRPD$ model, where there is no cost for waiting but orders have deadlines.
As for hardness results, it is known that the problem is $\APX$-hard and that the natural
linear program for $\JRP$ has integrality gap at least $1.245$. Both results hold even for $\JRPD$. 
In the online scenario, the best lower and upper bounds on the competitive ratio are
$2.64$ and $3$, respectively. The lower bound of $2.64$ applies even to the
restricted version of $\JRP$, denoted $\JRPL$, where the waiting cost function is
linear.

We provide several new approximation results for $\JRP$. In the offline case, we
give an algorithm with ratio $\approx 1.791$, breaking the barrier of $1.8$.
In the online case, we show a lower bound of $\approx 2.754$ on the competitive ratio
for $\JRPL$ (and thus $\JRP$ as well), improving the previous bound of $2.64$. 
We also study the online version of $\JRPD$, for which we prove that the optimal
competitive ratio is $2$.
\end{abstract}


\newpage


\section{Introduction}
\label{sec:introduction}

The Joint Replenishment Problem ($\JRP$) deals with optimizing shipments of goods from a supplier to a
set $\retailers$ of retailers through a shared warehouse. Over time, retailers issue orders for items.
All ordered items must be subsequently shipped to the retailers, although some shipments can be
delayed, in order to aggregate orders into fewer shipments to reduce cost.

Specifically, for each $\rho\in\retailers$ we are given
the cost $\cost_\rho$ of transporting goods from the warehouse to $\rho$. We are also given
the cost $\COST$ of transporting goods from the supplier to the warehouse. A shipment of goods
from the supplier to a subset $S\subseteq\retailers$ of retailers involves first shipping them to
the warehouse and then redistributing them to all retailers in $S$, at cost
equal $\COST + \sum_{\rho\in S}\cost_\rho$. Note that this cost is independent of the set of
items shipped. The waiting cost of an item $\pi$ ordered at time $a$ and delivered at time $t\ge a$ is
given by a function $h(t)$, possibly dependent on $\pi$,
where we assume that the values of $h(t)$ are non-decreasing with $t$.
The objective is to minimize the overall cost of satisfying all orders, namely the total cost
of shipments plus the total waiting cost of all orders.

There are two natural restrictions on waiting costs that have been previously considered in the
literature. One is to assume that the waiting costs are linear, that is
$h_{\pi}(t) = t-a_\pi$, where $a_\pi$ is the arrival time of an order $\pi$. 
We denote this version by $\JRPL$. In the other version, called
$\JRP$ with deadlines ($\JRPD$), there is no
waiting cost but ordered items must be shipped before pre-specified deadlines.

Several different, but mathematically equivalent, definitions of $\JRP$ can be found in the literature. 
In earlier papers $\JRP$ is phrased as an inventory management problem, where the inventory of
some commodity needs to meet a set of demands that arrive over time. The objective is to
balance the cost of orders\footnote{Note that the meaning of the term ``order'' here is different from our usage.}
that replenish the inventory with the cost of maintaining
it (the so-called holding cost). This formulation would not quite make sense in the
online scenario, since the orders that need to be scheduled take place before demands.
An online model of $\JRP$, referred to as \emph{Make-to-Order $\JRP$},
was introduced by Buchbinder~{\etal}~\cite{jrp-online-buchbinder}. In their description
there is no inventory; instead,
a collection of demands must be satisfied by subsequent orders. Except for minor
terminology variations,
our definition is essentially the same as that in \cite{jrp-online-buchbinder}.
Some of recent papers \cite{packet-aggregation-becchetti,aggregation-bkv,khanna-message-aggregation,online-control-wads}
on control message aggregation in networks,
introduce a model where control packets (corresponding to orders, in our definition) 
need to be transmitted to
a common destination (corresponding to the supplier), paying the transmission and delay costs. 
In particular, the flat-tree case studied in \cite{aggregation-bkv} is equivalent to $\JRPL$. 

$\JRP$ has been well studied in Operations Research and, more recently,
in the area of approximation algorithms. The problem is known to be strongly
$\NP$-hard, even for the special cases of $\JRPD$ and $\JRPL$
\cite{jrp-arkin,packet-aggregation-becchetti,jrp-deadlines-nonner}.
$\APX$-hardness proofs, even for some restricted versions of $\JRPD$, 
were given by Nonner and Souza~\cite{jrp-deadlines-nonner}
and Bienkowski~{\etal}~\cite{jrp-deadlines-icalp}.
The first approximation algorithm, with ratio $2$, was provided by
Levi, Roundy and Shmoys~\cite{jrp-levi-2-approx}, and was subsequently improved by
Levi et al.~\cite{jrp-owmr-levi-journal,jrp-owmr-levi-approx} to $1.8$
(see also \cite{jrp-owmr-levi-soda}).
For $\JRPD$, the ratio was reduced to $5/3$ by
Nonner and Souza~\cite{jrp-deadlines-nonner}
and then to $\approx 1.574$ by Bienkowski~\etal~\cite{jrp-deadlines-icalp}. All upper bounds are based on randomized
rounding of the natural linear program for $\JRP$.
As shown in \cite{jrp-deadlines-icalp}, the integrality gap of this linear program 
is at least $1.245$, even for $\JRPD$.

The online version of $\JRP$ was studied in the earlier discussed paper
by Buchbinder~{\etal}~\cite{jrp-online-buchbinder}, who give a 
$3$-competitive algorithm, using a primal-dual scheme, and show a lower bound of $2.64$
on the competitive ratio, even for $\JRPL$. (See also Brito~{\etal}~\cite{aggregation-bkv}
for related work.)


\myparagraph{Our contributions.}
We provide several new approximation results for $\JRP$. In the offline case, we
give an algorithm with approximation ratio $\approx 1.791$, breaking the barrier of $1.8$ from 
\cite{jrp-owmr-levi-journal,jrp-owmr-levi-approx}.
The improvement is achieved by refining the analysis of the LP-rounding algorithm in
\cite{jrp-owmr-levi-journal,jrp-owmr-levi-approx} and combining it with
a new algorithm that uses an approximation for $\JRPD$ from \cite{jrp-deadlines-icalp}.

We also study online algorithms for $\JRP$. We show that deterministic online algorithms,
even for $\JRPL$,
cannot be better than $\approx 2.754$-competitive, improving the bound of $2.64$
from~\cite{jrp-online-buchbinder}. 
For $\JRPD$, we prove that the optimal competitive ratio is $2$.

For convenience, we use a model where time is continuous, while some of previous
works on this topic used the discrete-time model. Algorithms 
for the continuous model can be easily translated into the discrete model,
preserving the same performance guarantee. In our lower bound proofs all
waiting-cost functions are left-continuous, and for such functions lower
bound arguments for competitive ratios carry over to the discrete case as well.
This relationship will be formally spelled out in the final version of this
paper (see a similar argument in \cite{jrp-online-buchbinder}).



\section{Preliminaries}\label{sec: preliminaries}

We now review our terminology and formalize the definition of $\JRP$.
Recall that $\retailers$ denotes the set of retailers. Each order can be specified by a triple
$\pi = (\rho,a,h)$, where  $a$ is the time when $\pi$ was issued, $\rho$ is the retailer that
issued $\pi$, and $h()$ is the waiting cost function of $\pi$, where $h(t) = \infty$ for $t <a$
and $h(t)$ is non-decreasing for $t\ge a$. Let $\Pi$ be the set of all orders.
In $\JRPL$ we will assume that
$h(t) = t-a$ for $t\ge a$, and in $\JRPD$ we have $h(t) = 0$ for $a \le t \le d$ and $h(t) = \infty$
otherwise. Then $d$ is called the \emph{deadline} of order $\pi$. In $\JRPD$ we will in
fact specify an order by a triple $\pi = (\rho,a,d)$.

A \emph{shipment} is specified by a pair $(S,t)$, where $S$ is the set of retailers receiving
the shipment and $t$ is the time of the shipment.
The cost of shipment $(S,t)$ is $\COST + \sum_{\rho\in S}\cost_\rho$.
A schedule is a set $\barS$ of shipments.
An order $\pi = (\rho,a,h)$ is said to be \emph{pending} in $\barS$ at time $\tau$ if
$a\le\tau$ and there is no shipment $(S,t)$ in $\barS$ with $\rho \in S$ and
$a\le t < \tau$.
If $\pi = (\rho,a,h)$ is pending at time $t$ and $(S,t)$ is a shipment in $\barS$
such that $\rho \in S$, then we say that $(S,t)$ \emph{satisfies} $\pi$.
In such case, the waiting cost of $\pi$ in $\barS$ is $h(t)$.
The cost of $\barS$ is the sum of its shipment and waiting costs, 
that is $\costtotal(\barS) = \costship(\barS) + \costwait(\barS)$, where
\begin{equation*}
	\costship(\barS) = \sum_{(S,t)\in \barS}(\COST + \sum_{\rho\in S}\cost_\rho)
			\quad\quad\textrm{and}\quad\quad
	\costwait(\barS) = \sum_{\pi =(\rho,a,h) \in \Pi}
					\min_{\substack{ (S,t)\in \barS \\\rho\in S,\,t\ge a}} h(t)
	\enspace,
\end{equation*}
where $\min\emptyset \equiv +\infty$.
The objective of $\JRP$ is to compute a schedule $\barS$ with minimum $\costtotal(\barS)$.


We use the standard definition of approximation algorithms.
We will say that a polynomial-time algorithm $\calA$ is an \emph{$R$-approximation algorithm}
for $\JRP$ if for any instance it computes a schedule of shipments 
whose cost is at most $R$ times the optimal cost for this instance.

In the online scenario, orders arrive
over time, and at each time $t$ an online algorithm must decide
whether to ship at time $t$ and, if so, to which retailers, based only
on the existing orders. For online algorithms we use the
term ``\emph{$R$-competitive}'' as a synonym of ``$R$-approximation''.

In the literature, some authors distinguish between absolute approximation
ratios (as defined above) and asymptotic ratios, where an algorithm is
allowed to pay some additional constant overhead cost, independent of the
instance. While our upper bounds apply to the absolute ratio, our
lower bound proofs can be extended to the asymptotic ratios by
repeating the lower bound strategies a sufficient number of times.


\section{An Upper Bound of 1.791 for Offline {\JRP}}

We now present our $1.791$-approximation algorithm.
The algorithm first computes an optimal solution $(x^*,y^*)$ of the
linear program for $\JRP$. Then it chooses randomly one of three 
different LP-rounding methods, with probabilities and other parameters suitably optimized,
to obtain a ratio improving the bound of $1.8$ from~\cite{jrp-owmr-levi-journal,jrp-owmr-levi-approx}.

\mymedparagraph{Linear program.}
Let $T = \{a : (\rho,a,h) \in \Pi\}$ be the times when orders are placed.
We can assume that all shipments occur at times in $T$. We use the following
indicator variables: $x_a$ represents a supplier-to-warehouse shipment at time $a$,
$x_{\rho,a}$ represents a warehouse-to-retailer $\rho$ shipment at time $a$, and
$y_{\pi,a}$ represents an order $\pi$ being satisfied by a shipment at time $a$.
The following linear program is the fractional relaxation of the 
natural integer program for JRP. 
\begin{alignat}{3}
\textrm{minimize}\hspace{1.2cm} 
    &  
        \textstyle \sum_{a \in T} \COST \cdot x_a\ 
        	&&+ \ \sum_{a \in T} \sum_{\rho\in\retailers} \cost_\rho \cdot x_{\rho,a}
        	+ \sum_{\pi =(\rho,a,h)\in \Pi} \sum_{t \in T: t \geq a} h(t) \cdot y_{\pi,t} 
    \notag \\
    \textrm{subject to}\quad\quad 
        \textstyle x_a  \;&\geq\;  x_{\rho,a} 
        & &\textrm{for all}\; a \in T, \rho \in \retailers  \\
    \textstyle x_{\rho,a} \;&\geq y_{\pi,a} \;
        & &\textrm{for all}\; \pi = (\rho,a,h) \in \Pi \\
    \label{eq:lp_y}
    \textstyle \sum_{t \geq a} y_{\pi,t} \;&\geq 1 \;        
        & &\textrm{for all}\; \pi = (\rho,a,h) \in \Pi \\
    \textstyle x_a, x_{\rho,a}, y_{\pi,a} \;&\ge 0\;
        & &\textrm{for all}\; a \in T, \rho \in \retailers,\pi\in\Pi
\end{alignat}
Throughout the rest of the paper, we will fix an optimal (fractional) solution
to the LP above and denote it by $(x^*,y^*)$. Note that 
constraints \eqref{eq:lp_y} are satisfied with equality in $(x^*,y^*)$.


\mymedparagraph{Algorithms~{\ALGTSRP} and~{\ALGOSRP}.}
The cost of any solution $(x,y)$ to the LP above can be  naturally split into
three parts: the supplier-to-warehouse shipping cost, $\coststem(x,y)$; the 
warehouse-to-retailers shipping cost, $\costbranch(x,y)$; and the waiting cost,
$\costwait(x,y)$.  When the solution $(x,y)$ is a random variable, these
denote appropriate  {\em expected} costs. 
We say that a solution $(x,y)$ is an \emph{ $(r_1,r_2,r_3)$-approximation} of $(x^*,y^*)$
if the following three conditions hold:
\begin{compactitem}
\item $\coststem(x,y) \leq r_1 \cdot \coststem(x^*,y^*)$, 
\item $\costbranch(x,y) \leq r_2 \cdot \costbranch(x^*,y^*)$, and 
\item $\costwait(x,y) \leq r_3 \cdot \costwait(x^*,y^*)$. 
\end{compactitem}
In our solution, we build on two LP-based, polynomial-time
algorithms of Levi~\etal~\cite{jrp-owmr-levi-journal}. Both are based on
random shifting.  The first one (denoted {\ALGTSRP}) is called 
\emph{Two-Sided Retailer Push Algorithm} the second one (denoted {\ALGOSRP}) is called 
\emph{One-Sided Retailer Push Algorithm}.

\begin{lemma}[\cite{jrp-owmr-levi-journal}]\label{lem:a1}
{\rm (a)}
Algorithm {\ALGTSRP} computes an integral solution $(x,y)$
that is a~$(1,2,2)$-approximation of the optimal fractional solution $(x^*,y^*)$.

{\rm (b)}
Algorithm {\ALGOSRP}, parameterized by $c \in (0,\half]$, computes an~integral solution $(x,y)$
that is a~$(\frac{1}{c},\frac{1}{1-c},\frac{1}{1-c}$)-approximation of the optimal fractional solution $(x^*,y^*)$.
\end{lemma}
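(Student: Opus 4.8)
The plan is to recall the two random-shifting schemes of Levi~\etal~\cite{jrp-owmr-levi-journal} and to verify, for each, the three component bounds. The common device is a single uniform shift $\theta\in[0,1)$ used to discretize the cumulative profiles of the fixed optimal fractional solution. For the supplier-to-warehouse layer, set $X(t)=\sum_{a\in T,\,a\le t}x^*_a$, fix a spacing parameter $\beta>0$, and place a warehouse shipment at every time at which $X$ crosses the grid $\beta(\nat+\theta)$. Since this grid has spacing exactly $\beta$ in the cumulative coordinate, the expected number of warehouse shipments equals $\tfrac1\beta\sum_{a\in T}x^*_a$, so in expectation $\coststem(x,y)=\tfrac1\beta\coststem(x^*,y^*)$; one takes $\beta=1$ for {\ALGTSRP} (factor $1$) and $\beta=c$ for {\ALGOSRP} (factor $\tfrac1c$). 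The same construction is then applied per retailer $\rho$ to the profile of $x^*_{\rho,\cdot}$, but the candidate retailer shipments so produced are realized only at already-chosen warehouse-shipment times (a ``push''). This push is where the two algorithms differ: {\ALGTSRP} lets an order be served by the nearest warehouse shipment on either side of its window (two-sided), whereas {\ALGOSRP} uses only the nearest one in a fixed direction (one-sided).

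The second ingredient controls the waiting cost. For each order $\pi=(\rho,a,h)$ I would use the LP equality $\sum_{t\ge a}y^*_{\pi,t}=1$ noted after the program, together with the monotonicity of $h$: letting $d_\pi$ be the earliest time at which $\sum_{a\le t\le d_\pi}y^*_{\pi,t}$ first reaches a threshold $\gamma\in(0,1)$, a $y^*_\pi$-mass of at least $1-\gamma$ sits at times $\ge d_\pi$, so $\sum_{t\ge a}h(t)\,y^*_{\pi,t}\ge h(d_\pi)(1-\gamma)$, i.e.\ $h(d_\pi)\le\tfrac{1}{1-\gamma}\sum_{t\ge a}h(t)\,y^*_{\pi,t}$. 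Consequently, if the algorithm always serves $\pi$ by time $d_\pi$, then $\costwait(x,y)\le\tfrac{1}{1-\gamma}\costwait(x^*,y^*)$; taking $\gamma=\onehalf$ yields factor $2$ for {\ALGTSRP}, and $\gamma=c$ yields $\tfrac{1}{1-c}$ for {\ALGOSRP} (which also explains the range $c\le\onehalf$: beyond it {\ALGOSRP} is dominated by {\ALGTSRP}). The retailer-shipping bound is obtained analogously from the per-retailer random-shift process, with the grid spacing on $x^*_{\rho,\cdot}$ chosen so that the expected number of retailer-$\rho$ shipments is $r_2\sum_a x^*_{\rho,a}$ — namely spacing $\onehalf$ for {\ALGTSRP} and $1-c$ for {\ALGOSRP} — provided each candidate retailer shipment can be realized at a warehouse-shipment time without violating the target-time guarantee.

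The crux, and the step I expect to be the main obstacle, is exactly this feasibility coupling between the two layers: one must exhibit, for every order $\pi$, a warehouse shipment that serves $\rho$ at or before $d_\pi$. The LP constraints $x^*_a\ge x^*_{\rho,a}\ge y^*_{\pi,a}$ force the window $[a,d_\pi]$ to carry $x^*$-mass (hence $x^*_{\rho,\cdot}$-mass) at least $\gamma$; when $\gamma\ge\beta$ this already yields a warehouse shipment inside the window, because any closed interval of $X$-mass at least $\beta$ meets the grid, which settles {\ALGOSRP} (there $\gamma=c=\beta$). For {\ALGTSRP} one has $\gamma=\onehalf<1=\beta$, so the window may miss every warehouse shipment, and this is precisely what the two-sided push is for: $\pi$ is served by the nearest warehouse shipment on the appropriate side, and a short calculation — using that the random shift keeps the nearest grid point close in the cumulative coordinate, together with $x^*\ge x^*_\rho\ge y^*_\pi$ and the redistribution of the $y^*_\pi$-mass — shows the extra shipping and waiting charges stay within the claimed factors. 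Carrying out these couplings cleanly, and in particular handling atoms of the cumulative profiles at a single time $a\in T$ (where the notion of a ``crossing'' must be pinned down) and the open/closed edge cases of the thresholds, is the only delicate part; the remainder is bookkeeping. Since the full argument appears in \cite{jrp-owmr-levi-journal}, it would suffice to cite it, and I would include only the sketch above.
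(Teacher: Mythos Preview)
The paper does not prove this lemma at all: it is stated as a citation of Levi~\etal~\cite{jrp-owmr-levi-journal} and used as a black box, so your closing remark that ``it would suffice to cite it'' already matches the paper exactly. Your sketch goes well beyond what the paper offers.

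That said, a couple of points in the sketch deserve correction if you retain it. First, the algorithms do \emph{not} use a single shared shift $\theta$: as the paper's own description of {\ALGOSRP} (just after the lemma) makes explicit, the warehouse layer uses an independent uniform shift $\psi\in[0,c]$ and each retailer $\rho$ its own $\psi_\rho\in[0,1-c]$. The independence is what makes the waiting-time analysis a convolution (cf.\ \lref[Observation]{obs:1SRP-pace}). Second, your reading of ``two-sided'' in {\ALGTSRP} as ``nearest warehouse shipment on either side of the window'' is not quite the mechanism in~\cite{jrp-owmr-levi-journal}; there the two-sidedness refers to the retailer push being allowed to move a tentative retailer shipment either forward or backward in time to a warehouse-shipment time, which is natural in the holding-cost/backlog model of that paper. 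In the present make-to-order formulation (where $h(t)=\infty$ for $t<a$) this requires some care, and the coupling argument you outline for {\ALGTSRP} would need to be adjusted accordingly. None of this affects the paper's use of the lemma, which is purely as a quoted result.
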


The currently best known $1.8$-approximation algorithm~\cite{jrp-owmr-levi-journal}
is obtained by simply running {\ALGTSRP} with probability $\threefifths$ and 
{\ALGOSRP} with probability $\twofifths$, setting $c = \onethird$ in the latter.


\mymedparagraph{High-level idea.}
We start by showing that the $\costwait$ estimate of Algorithm~{\ALGOSRP}
in \lref[Lemma]{lem:a1}.b is not tight.  To analyze it more accurately, we
define a {\em shipping pace} of an~algorithm and show a connection between the shipping
pace and the waiting cost.  We use that to show that, for $c = \onethird$, 
Algorithm~{\ALGOSRP} computes in fact a $(3,\threehalves,\nineeights)$-approximation. This improvement alone does
not reduce the overall approximation ratio of the {\ALGOSRP}-and-{\ALGTSRP} combination, 
as it is still dominated by the retailer shipment cost ratio.

However, we will add a third ingredient to this combination: Algorithm~{\ALGLPS}, that
uses scaling of the fractional solution to obtain a new
fractional solution obeying certain deadlines and then applies the recent
result on $\JRPD$, the {\em deadline-constrained} variant of $\JRP$~\cite{jrp-deadlines-icalp}, 
to round it to an integral solution. By carefully choosing the scaling
factor, probabilities of choosing Algorithms~{\ALGTSRP}, {\ALGOSRP} and
{\ALGLPS}, and fine-tuning the choice of~$c$ in Algorithm~{\ALGOSRP},  we
eventually reduce the approximation ratio for JRP to about $1.791$.


\mymedparagraph{Shipping pace.}
To measure the waiting cost of an algorithm, we estimate how fast it
satisfies each particular order in comparison to how fast these orders are
satisfied in  $(x^*,y^*)$. In $(x^*,y^*)$, 
the orders can be thought of as being satisfied
gradually with time. In particular, a fraction $\sum_{t \in [a,t'] \cap T} y^*_{\pi,t}$ of 
an order $\pi = (\rho,a)$ is satisfied till time~$t'$ (inclusively). 
For any $\alpha \in [0,1)$, let
\begin{equation}
\label{eq:lp_time}
\textstyle
    \firsttimelp(\pi,\alpha) = \min \left\{ t \in T: t\ge a \;\textrm{and}\;
					\sum_{t' \geq t} y^*_{\pi,t'} \leq 1 - \alpha \right\}
    \enspace.
\end{equation}
In other words, $\firsttimelp(\pi,\alpha)$ is the first time when the yet un-satisfied
fraction of $\pi$ in $(x^*,y^*)$ is at most $1-\alpha$.

Let $\shippace : [0,1] \to \mathbb{R}_{\geq 0}$ be an integrable function such that 
$\int_0^1 \shippace(z)\, \dz = 1$. We say that a~(randomized) algorithm $\calA$ has a~{\em shipping pace}
$\shippace$ if for any order $\pi = (\rho,a) \in \Pi$ and $\alpha \in [0,1)$, it holds that 
\begin{equation}
\label{eq:pace}
    \Pr\Big[\,\textrm{$\calA$ ships at time $t \in [a, \firsttimelp(\pi,\alpha)]$}\,\Big] 
        \geq \int_0^\alpha \shippace (z) \, \mathrm{d}z
    \enspace.
\end{equation}
Note that a shipping pace is not unique; it is simply 
{\em a lower bound} on the shipping probability.


\begin{lemma}
\label{lem:pace_to_waiting_cost}
Let $\calA$ be a (randomized) algorithm with shipping pace $\shippace$ that produces a solution $(x,y)$. 
Then,
\begin{equation*}
    \costwait(x,y) \leq \costwait(x^*,y^*) \cdot 
        \sup_{w \in [0,1)} 
        \left\{ \frac{1}{1-w} \cdot \int_w^1 \shippace(z) \,\dz \right\}
        \enspace.
\end{equation*}
\end{lemma}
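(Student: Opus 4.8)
\myparagraph{Proof plan.} The plan is to reduce the statement to a per-order inequality and then to a single one-variable comparison of integrals. Since $\costwait$ of an integral solution is a sum over orders of their individual waiting costs, and this remains true in expectation for the random solution $(x,y)$ produced by $\calA$, it suffices to prove, for each order $\pi=(\rho,a,h)$,
\[
  \Exp[h(\tau_\pi)] \;\le\; \left(\sup_{w\in[0,1)}\frac{1}{1-w}\int_w^1 \shippace(z)\,\dz\right)\cdot \sum_{t\ge a} h(t)\,y^*_{\pi,t}\;=:\;C\cdot\sum_{t\ge a} h(t)\,y^*_{\pi,t},
\]
where $\tau_\pi$ is the (random) time at which $\calA$ first ships to $\rho$ at a time $\ge a$, i.e.\ the shipment that satisfies $\pi$; summing over $\pi$ then gives the lemma.

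Two easy ingredients take care of everything except the core estimate. First, unwinding the definition of $\firsttimelp$, the map $\alpha\mapsto\firsttimelp(\pi,\alpha)$ is non-decreasing and equals $t$ on a set of $\alpha$'s of Lebesgue measure exactly $y^*_{\pi,t}$, so the LP waiting cost of $\pi$ equals $\int_0^1 h(\firsttimelp(\pi,\alpha))\,\dd\alpha$. Second, for every non-decreasing $f\colon[0,1]\to\reals_{\ge 0}$ one has $\int_0^1 f(z)\,\shippace(z)\,\dz \le C\int_0^1 f(z)\,\dz$: writing $f(z)=f(0)+\int_{(0,z]}\dd f(w)$ and using Fubini, the left side becomes $f(0)\int_0^1\shippace(z)\,\dz+\int_{(0,1]}\bigl(\int_w^1\shippace(z)\,\dz\bigr)\,\dd f(w)$ and the right side becomes $f(0)+\int_{(0,1]}(1-w)\,\dd f(w)$; since $\int_0^1\shippace(z)\,\dz=1$, $\int_w^1\shippace(z)\,\dz\le C(1-w)$ for $w\in[0,1)$, $C\ge 1$ (take $w=0$), and $\dd f\ge 0$, the inequality follows term by term. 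Applying this with $f=h\circ\firsttimelp(\pi,\cdot)$ reduces the per-order goal to showing $\Exp[h(\tau_\pi)]\le\int_0^1 h(\firsttimelp(\pi,z))\,\shippace(z)\,\dz$.

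The heart of the proof is this last inequality, and here I would introduce the random variable $\widehat\alpha_\pi:=\sum_{t<\tau_\pi}y^*_{\pi,t}$, the LP-fraction of $\pi$ shipped strictly before $\calA$ serves it. The key bookkeeping identity, immediate from the definition of $\firsttimelp$, is that $\{\tau_\pi\le\firsttimelp(\pi,\alpha)\}=\{\widehat\alpha_\pi<\alpha\}$ for all $\alpha\in(0,1)$; hence the pace inequality \eqref{eq:pace} says exactly that $\Pr[\widehat\alpha_\pi<\alpha]\ge\int_0^\alpha\shippace(z)\,\dz$ for all $\alpha\in[0,1)$, i.e.\ $\widehat\alpha_\pi$ is stochastically dominated by a random variable $Z$ with density $\shippace$ on $[0,1]$. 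On the other hand, for every $\alpha>\widehat\alpha_\pi$ we have $\firsttimelp(\pi,\alpha)\ge\tau_\pi$ and therefore $h(\tau_\pi)\le h(\firsttimelp(\pi,\alpha))$; letting $\alpha\downarrow\widehat\alpha_\pi$ gives $h(\tau_\pi)\le\Psi(\widehat\alpha_\pi)$, where $\Psi(\gamma):=\lim_{\alpha\downarrow\gamma}h(\firsttimelp(\pi,\alpha))$ is non-decreasing. Putting these together, $\Exp[h(\tau_\pi)]\le\Exp[\Psi(\widehat\alpha_\pi)]\le\Exp[\Psi(Z)]=\int_0^1\Psi(z)\,\shippace(z)\,\dz$; and since $\Psi$ coincides with $h\circ\firsttimelp(\pi,\cdot)$ off the finite set of jump points of $\firsttimelp(\pi,\cdot)$ while the measure $\shippace(z)\,\dz$ is atomless, this last integral equals $\int_0^1 h(\firsttimelp(\pi,z))\,\shippace(z)\,\dz$, as needed.

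The only genuinely delicate part, I expect, is this core estimate: one has to pick the right progress variable (the fraction shipped \emph{strictly before} $\tau_\pi$, not up to and including it), verify $\{\tau_\pi\le\firsttimelp(\pi,\alpha)\}=\{\widehat\alpha_\pi<\alpha\}$ carefully against the precise definition of $\firsttimelp$, and remember that $h$ and $\firsttimelp(\pi,\cdot)$ are only piecewise constant, so that $h(\tau_\pi)$ is controlled by the right-continuous modification $\Psi$ rather than by $h(\firsttimelp(\pi,\widehat\alpha_\pi))$, which can be strictly smaller precisely when $\widehat\alpha_\pi$ puts mass on a jump of $\firsttimelp(\pi,\cdot)$. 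It is also worth noting that \eqref{eq:pace} with $\alpha\to1$ forces $\tau_\pi$ to lie in the support of $y^*_{\pi,\cdot}$ almost surely, so $\widehat\alpha_\pi<1$ a.s.\ and $\Psi(\widehat\alpha_\pi)$ is always well defined.
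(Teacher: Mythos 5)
Your overall strategy is sound and, at its core, it is the same comparison the paper makes: both proofs reduce to a single order, measure the time at which the algorithm satisfies it in units of LP progress, and combine $\int_w^1 \shippace(z)\,\dz \le C(1-w)$ with $\int_0^1 \shippace(z)\,\dz = 1$. The paper executes this by summation by parts on $h$ over the discrete grid $t_0,t_1,\dots$ and compares the tail sums $\sum_{j\ge i+1}\E[y_{\pi,t_j}]$ with $\sum_{j\ge i+1}y^*_{\pi,t_j}$ directly via \eqref{eq:pace}; you phrase the same content as stochastic dominance of a progress variable plus a weighted-integral inequality for monotone functions. Your packaging is, if anything, the more careful one: the right-continuous modification $\Psi$ and the remark that the measure $\shippace(z)\,\dz$ is atomless address boundary issues that the paper's proof glosses over.

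There is, however, one concrete discrepancy, and it sits exactly at the spot you flagged as delicate. With \eqref{eq:lp_time} as literally written (tail condition $\sum_{t'\ge t}y^*_{\pi,t'}\le 1-\alpha$), the map $\alpha\mapsto\firsttimelp(\pi,\alpha)$ is \emph{not} the quantile function of $y^*_{\pi,\cdot}$: on the interval $\alpha\in\bigl(\sum_{j<i}y^*_{\pi,t_j},\,\sum_{j\le i}y^*_{\pi,t_j}\bigr]$ it equals $t_{i+1}$, not $t_i$ (for instance, if $y^*_{\pi,t_0}=1$ then $\firsttimelp(\pi,\alpha)=t_1$ for every $\alpha>0$). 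Consequently $\int_0^1 h(\firsttimelp(\pi,\alpha))\,\dd\alpha=\sum_i h(t_{i+1})\,y^*_{\pi,t_i}$, which is the LP waiting cost delayed by one grid point, and your identity $\{\tau_\pi\le\firsttimelp(\pi,\alpha)\}=\{\widehat\alpha_\pi<\alpha\}$ shifts by one index as well (the correct progress variable for the literal definition is the LP mass placed strictly before the \emph{predecessor} of $\tau_\pi$); the chain then only yields $\E[h(\tau_\pi)]\le C\sum_i h(t_{i+1})y^*_{\pi,t_i}$ rather than the claimed bound. This is not really a defect of your argument: under the literal reading of \eqref{eq:lp_time} the lemma itself is false --- in the example above an algorithm that always ships at $t_1$ has shipping pace $\shippace\equiv 1$, hence $C=1$, yet incurs waiting cost $h(t_1)>0$ against an LP waiting cost of $h(t_0)=0$ --- and the paper's own proof implicitly relies on the corrected reading too. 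Everything you wrote goes through verbatim once $\firsttimelp(\pi,\alpha)$ is taken to be the first $t\in T$, $t\ge a$, with $\sum_{t'\le t}y^*_{\pi,t'}\ge\alpha$ (equivalently, replace $\sum_{t'\ge t}$ by $\sum_{t'>t}$ in \eqref{eq:lp_time}); with that definition your key identity is exactly right. State the corrected definition explicitly and your proof is complete.
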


\begin{proof} 
We show that the relation above holds for the waiting cost of any individual order $\pi = (\rho,a,h)$.
For the sake of this proof, we number all the consecutive times from set $\{t \in T : t \geq a\}$ as 
$t_0 = a, t_1, t_2, \ldots$. Then the waiting cost associated with  $\pi$ is 
\begin{equation*}
	\textstyle
   \costwait^\pi(x,y) = \sum_{i \geq 0} h(t_i) \cdot \E[y_{\pi,t_i}] 
       = \sum_{i \geq 0} [ h(t_{i+1}) - h(t_i)] \cdot \sum_{j \geq i+1} \E[y_{\pi,t_j}]
  \enspace,
\end{equation*}
and the waiting cost of $\pi$ in $(x^*,y^*)$ can be expressed analogously (but
without taking expected values).
It is thus sufficient to compare $\sum_{j \geq i+1} \E[y_{\pi,t_j}]$ with  
$\sum_{j \geq i+1} y^*_{\pi,t_j}$. 
Let $\alpha = \sum_{j =0}^i y^*_{\pi,t_j} = 1- \sum_{j \geq i+1} y^*_{\pi,t_j}$.
Then, by the definition of \eqref{eq:lp_time}, 
$\firsttimelp(\pi,\alpha) \leq t_{i+1}$, and therefore
\begin{equation*}
\sum_{j \geq i+1} \E[y_{\pi,t_j}] 
= \Pr[\textrm{$\calA$ ships at time $t > t_{i+1}$}]
\leq \Pr[\textrm{$\calA$ ships at time $t > \firsttimelp(\pi,\alpha)$}]
\leq \int_\alpha^1 \shippace(z) \, \dz \enspace,
\end{equation*}
where we consider only shipments to $\rho$.
The equality holds because all $y_{\pi,t_j}$ are 0-1 variables and at most one is non-zero. 
The inequalities follow from $\firsttimelp(\pi,\alpha) \leq t_{i+1}$ and the 
definition of the shipping pace in \eqref{eq:pace}.
\end{proof}


\mymedparagraph{Waiting cost of Algorithm~{\ALGOSRP}}. 
We start with a brief description of Algorithm~{\ALGOSRP} (see Algorithm~2 in~\cite{jrp-owmr-levi-journal}).
The algorithm is parametrized by $c\in[0,\half]$.
It first computes the optimal fractional solution $(x^*,y^*)$ and then it schedules
the shipments, in two phases. In the first phase, it schedules the supplier-to-warehouse shipments.
Intuitively, one can visualize this schedule in terms of the
``virtual warehouse time'', equal to the accumulated fractional shipping value for the
warehouse, $X_t = \sum_{t'\le t}x_{t'}$. The algorithm chooses uniformly a random
$\psi\in[0,c]$ and schedules the shipments at virtual warehouse times 
$\psi, \psi+c,\psi+2c,...$, which then can be translated into real times. 
More formally, these shipments are scheduled at (real) times $t$ for which there is $i$ such 
that $X_{t-1} < \psi+ic \le X_t$.
In the second phase, we define tentative shipments from the warehouse to each retailer $\rho$.
This is done similarly, by choosing a random $\psi_\rho\in [0,1-c]$ and tentatively
scheduling these shipments at retailer $\rho$'s virtual times $\psi_\rho, \psi_\rho+1-c,\psi_\rho+2(1-c),...$.
For each tentative shipment of $\rho$, say at a (real) time $t$,
the actual shipment to $\rho$ will take place at the first time $t'\ge t$ for which
there is a supplier-to-warehouse shipment.


\begin{observation}\label{obs:1SRP-pace}
Algorithm~{\ALGOSRP}, with parameter $c \in (0,\half]$, has a shipping pace
\[
\shippaceb(z) = 
\frac{1}{1-c} \cdot
\begin{cases}
z / c & \mathrm{for}\;\;z \in [0,c), \\
1 & \mathrm{for}\;\;z \in [c,1-c), \\
(1-z) / c  & \mathrm{for}\;\;z \in [1-c,1].
\end{cases}
\]
\end{observation}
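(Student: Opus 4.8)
The plan is to verify directly, from the description of Algorithm~\ALGOSRP, that the shipping probability bound~\eqref{eq:pace} holds with $\shippace = \shippaceb$. Fix an order $\pi = (\rho,a,h)$ and a value $\alpha \in [0,1)$, and write $t^\star = \firsttimelp(\pi,\alpha)$. By definition of $t^\star$, the fractional solution $(x^*,y^*)$ satisfies at least an $\alpha$-fraction of $\pi$ in the time window $[a, t^\star]$; since constraint~\eqref{eq:lp_y} is tight and $x^*_{\rho,t}\ge y^*_{\pi,t}$, the accumulated retailer-$\rho$ shipping value over $[a,t^\star]$ is at least $\alpha$, and likewise the accumulated warehouse value $X_{t^\star}-X_{a^-}$ is at least $\alpha$ (because $x^*_t \ge x^*_{\rho,t}$). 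So what we must lower-bound is the probability that \emph{some} actual shipment reaches $\rho$ at a real time in $[a, t^\star]$, given that Algorithm~\ALGOSRP\ places tentative retailer shipments at spacing $1-c$ in $\rho$'s virtual time and then pushes each forward to the next warehouse shipment, which are themselves placed at spacing $c$ in warehouse virtual time.

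First I would set up the event decomposition. A shipment to $\rho$ occurs in $[a,t^\star]$ iff there is a tentative retailer shipment scheduled at a $\rho$-virtual time $v$ falling in the interval of $\rho$'s virtual time corresponding to $[a,t^\star]$ — of length at least $\alpha$ — \emph{and} the first warehouse shipment at or after $v$'s real time still lies at or before $t^\star$. Condition on the retailer offset $\psi_\rho\in[0,1-c]$; the probability that a tentative shipment lands in a sub-window of $\rho$'s virtual time of length exactly $\beta$ (where $\beta$ is the $\rho$-virtual length of $[a,t^\star]$, so $\beta\ge\alpha$) is $\min\{\beta/(1-c),1\}$, but we also need that tentative shipment not to be pushed past $t^\star$. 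The warehouse offset $\psi\in[0,c]$ is independent of $\psi_\rho$; conditioned on a tentative shipment sitting at a point whose warehouse-virtual coordinate is $w$ below the warehouse-virtual coordinate of $t^\star$ (and $w\le\alpha$ since the $[a,t^\star]$ warehouse window has length $\ge\alpha$), the push stays within $[a,t^\star]$ with probability $\min\{w/c,1\}$. Combining, one integrates the joint density over the admissible region; the worst case — smallest probability — is attained when both the retailer window and the warehouse window have length exactly $\alpha$ and are "aligned," i.e.\ $[a,t^\star]$ contributes length $\alpha$ simultaneously in both virtual clocks. Carrying out that two-dimensional uniform computation (a product of two independent "hit a window, then don't overshoot" events) yields exactly $\int_0^\alpha \shippaceb(z)\,\dz$, with the three pieces of $\shippaceb$ corresponding to $\alpha<c$, $c\le\alpha<1-c$, and $\alpha\ge 1-c$.

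The main obstacle is handling the coupling between the two random offsets correctly and identifying the genuine worst case. Naively one might multiply "probability a tentative retailer shipment lands in a length-$\alpha$ window" by "probability it isn't pushed out," but these refer to windows measured in two \emph{different} virtual clocks ($\rho$'s accumulated $x^*_{\rho,\cdot}$ versus the warehouse's accumulated $x^*_{\cdot}$), and the relation between them depends on how $(x^*,y^*)$ allocates mass, so I must argue that among all consistent configurations the adversarial one is the aligned case described above — essentially because making the warehouse window longer than $\alpha$ only helps (more chance the push stays inside), and similarly for the retailer window, while the constraints $x^*_t\ge x^*_{\rho,t}\ge y^*_{\pi,t}$ guarantee both windows are \emph{at least} $\alpha$. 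Once that reduction is in place the integral is routine and matches the claimed piecewise-linear $\shippaceb$; I would also note $\int_0^1 \shippaceb(z)\,\dz = 1$ as a consistency check, which follows since the two triangles of width $c$ and height $1/c$ plus the middle rectangle of width $1-2c$ and height $1$ sum to $(1-c)$, divided by the $1/(1-c)$ prefactor.
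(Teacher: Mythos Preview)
Your proposal is correct and follows essentially the same approach as the paper's sketch: the shipping time decomposes into an independent retailer wait (uniform on $[0,1-c]$ in retailer virtual time) plus a warehouse wait stochastically dominated by a uniform on $[0,c]$, and the convolution of these two uniforms gives exactly the trapezoidal density $\shippaceb$. The paper compresses the entire clock-comparison and worst-case argument you spell out into the single phrase ``the waiting at the warehouse is upper bounded with a uniform distribution $U[0,c]$''; your explicit identification of the two different virtual clocks and the monotonicity reason ($x^*_t \ge x^*_{\rho,t} \ge y^*_{\pi,t}$) why the aligned, length-$\alpha$ case is worst is precisely the content hiding behind that phrase.
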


\begin{proof} 
(Sketch.)
Every order is analyzed as if it was waiting first for a shipment (in the computed
integral solution) at its
retailer and then at the warehouse, with the analysis carried out with respect to
the retailer's virtual time (the amount by which the fractional solution satisfies the order).
Then the waiting at the retailer has uniform
distribution $U[0, 1-c]$ and the waiting at the warehouse is upper bounded
with a uniform distribution $U[0,c]$. Hence, the distribution of the total
waiting time is bounded by a convolution of the two uniform distributions, which results
in the trapezoidal shape of the shipping pace $\shippaceb(x)$, see \lref[Figure]{fig:pace}.
\end{proof}

\noindent
{\em Side note.} We can use \lref[Observation]{obs:1SRP-pace} along with 
\lref[Lemma]{lem:pace_to_waiting_cost} to improve the waiting cost ratio 
of Algorithm~{\ALGOSRP}. Namely, the supremum of 
$\frac{1}{1-w} \cdot \int_w^1 \shippaceb(z) \,\dz$ is achieved for $z = c$ and is then 
equal to $(2-3c)/(2 (1-c)^2)$. Setting $c = \onethird$, we obtain that 
Algorithm~{\ALGOSRP} returns a $(3,\threehalves,\nineeights)$-approximation.
As we noted earlier, this result alone cannot improve the combination of 
Algorithms~{\ALGTSRP} and {\ALGOSRP}, because we improved only the third coefficient of the ratio.


\mymedparagraph{Algorithm~{\ALGLPS}.}
To improve the overall approximation guarantee, we therefore need to improve
the two first coefficients in the approximation ratio. To this end, we design a new
algorithm that performs well in terms  of
warehouse and retailer shipping costs and has a bounded waiting cost ratio.
This algorithm (see below) randomly scales up the optimum solution $(x^*,y^*)$,
then it converts the scaled solution into an instance of $\JRPD$, the variant of
$\JRP$ with deadlines, to which
it applies an approximation algorithm from \cite{jrp-deadlines-icalp}.
The algorithm uses a probability distribution $\df$ of the scaling parameter
that we will define later.

\medskip
\begin{center}
\begin{minipage}{6in}
\noindent
\hrulefill

\noindent
{\bf Algorithm}~{\ALGLPS} \\
\vspace*{-4ex}

\noindent
\hrulefill
\begin{compactenum}
\item Choose $\scalefactor \in (0,1]$ from a distribution with the
	density function  $\df : (0,1] \to \mathbb{R}_{\geq 0}$.
\item Compute an optimal fractional solution $(x^*,y^*)$.
\item Create a new fractional solution $(\widehat{x}, \widehat{y})$ by setting $\widehat{x} = 
    \min\{1, x^*/\scalefactor \}$, and (greedily) choosing $\widehat{y}$ to minimize the waiting cost,
    subject to fixed fractional shipments $\widehat{x}$.  
\item Create an instance $\mathcal{L}$ of $\JRPD$, by inserting a deadline for each order 
$\pi$ at the first time $t'$ for which $\sum_{t\le t'} \hat{y}_{\pi,t} \geq 1$.
(Thus in $(\widehat{x}, \widehat{y})$ each order is served ``just in time''.)
\item Solve instance $\mathcal{L}$ by using the $\lambda$-approximation algorithm 
from~\cite{jrp-deadlines-icalp}, where $\lambda \approx 1.574$, and return the obtained solution.
\end{compactenum}
\vspace{-2ex}
\noindent
\hrulefill
\end{minipage}
\end{center}


\begin{lemma}
\label{lem:lps_costs}
Let $\I = \int_0^1 \frac{1}{z} \cdot \df(z) \,\mathrm{d}z$.
Algorithm {\ALGLPS} produces an integral solution $(x,y)$ with
\begin{align*}
\coststem(x,y) &\leq \lambda \cdot \I \cdot \coststem(x^*,y^*) \quad \textrm{and}
\\
\costbranch(x,y) &\leq \lambda \cdot \I \cdot \costbranch(x^*,y^*). 
\end{align*}
\end{lemma}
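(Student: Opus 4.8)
The plan is to trace the two shipping-cost components — warehouse and retailer — through the three operations Algorithm~{\ALGLPS} applies to $(x^*,y^*)$: the coordinatewise scaling in Step~3, the passage to the $\JRPD$ instance $\mathcal{L}$ in Step~4, and the rounding in Step~5. I would keep the scaling parameter $\scalefactor$ fixed throughout and take the expectation over $\scalefactor\sim\df$ only at the very end.

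First I would fix $\scalefactor\in(0,1]$. Since $\min\{1,t/\scalefactor\}\le t/\scalefactor$, the scaled solution satisfies $\widehat{x}_a\le x^*_a/\scalefactor$ and $\widehat{x}_{\rho,a}\le x^*_{\rho,a}/\scalefactor$ coordinatewise; as $\coststem$ and $\costbranch$ depend only on the shipment variables, this already yields $\coststem(\widehat{x},\widehat{y})\le\scalefactor^{-1}\coststem(x^*,y^*)$ and $\costbranch(\widehat{x},\widehat{y})\le\scalefactor^{-1}\costbranch(x^*,y^*)$. Conversely, $\scalefactor\le1$ gives $\widehat{x}_{\rho,t}\ge\min\{1,x^*_{\rho,t}\}$, hence $\sum_{t\ge a}\widehat{x}_{\rho,t}\ge1$ (either one such term already equals $1$, or all are below $1$ and their sum is $\sum_{t\ge a}x^*_{\rho,t}\ge\sum_{t\ge a}y^*_{\pi,t}\ge1$ by the LP constraints~\eqref{eq:lp_y}); therefore the greedy ``just in time'' assignment in Step~3 does reach total mass~$1$ for every order, each deadline $d_\pi$ of Step~4 is well defined with $d_\pi\ge a_\pi$, and $(\widehat{x},\widehat{y})$ is a \emph{feasible} fractional solution of the $\JRPD$ LP for $\mathcal{L}$: the chain $\widehat{x}_a\ge\widehat{x}_{\rho,a}\ge\widehat{y}_{\pi,a}$ follows from monotonicity of $\min\{1,\cdot\}$ (applied to $x^*_a\ge x^*_{\rho,a}$) and from the greedy construction, while $\sum_{a\le t\le d_\pi}\widehat{y}_{\pi,t}=1$ holds by definition of $d_\pi$.

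Next I would hand $(\widehat{x},\widehat{y})$ to the $\lambda$-approximation of~\cite{jrp-deadlines-icalp}. The property I would invoke is that this LP-rounding algorithm returns an integral schedule, feasible for $\mathcal{L}$ (hence serving every order $\pi=(\rho,a,h)$ at a time in $[a,d_\pi]$, so feasible for the original instance as well), whose warehouse shipping cost is at most $\lambda$ times the warehouse cost, and whose retailer shipping cost is at most $\lambda$ times the retailer cost, of the fractional solution it was given. Combined with the previous paragraph, for each fixed $\scalefactor$ the integral solution $(x,y)$ returned in Step~5 obeys $\coststem(x,y)\le\lambda\,\coststem(\widehat{x},\widehat{y})\le\lambda\scalefactor^{-1}\coststem(x^*,y^*)$ and, identically, $\costbranch(x,y)\le\lambda\scalefactor^{-1}\costbranch(x^*,y^*)$.

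Finally, since $\scalefactor$ is drawn from $\df$ independently of the internal randomness of the rounding, conditioning on $\scalefactor=z$ and integrating gives
\[
\coststem(x,y)=\int_0^1\E\!\left[\coststem(x,y)\mid\scalefactor=z\right]\df(z)\,\dz\le\lambda\left(\int_0^1\tfrac1z\,\df(z)\,\dz\right)\coststem(x^*,y^*)=\lambda\,\I\cdot\coststem(x^*,y^*),
\]
and the same computation yields the bound for $\costbranch$. The genuinely delicate step is the middle one: one must make sure that the result of~\cite{jrp-deadlines-icalp} applies to the \emph{prescribed} fractional solution $(\widehat{x},\widehat{y})$ of $\mathcal{L}$ — not merely to some optimal fractional solution of $\mathcal{L}$, which could have smaller total cost but a larger warehouse component — and that it bounds the two shipping-cost components \emph{separately} against the respective components of that fractional solution, rather than only bounding their sum. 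The remaining ingredients — the factor-$\scalefactor^{-1}$ cost inflation, the feasibility of $(\widehat{x},\widehat{y})$ for $\mathcal{L}$, and the averaging over $\scalefactor$ — are just monotonicity of $\min\{1,\cdot\}$ and an exchange in the order of integration.
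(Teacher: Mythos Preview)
Your argument is correct and follows exactly the paper's route: fix $\scalefactor$, use the coordinatewise bound $\widehat{x}\le x^*/\scalefactor$ to get the factor $1/\scalefactor$ on each shipping component, apply the $\lambda$-approximation for $\JRPD$ to $(\widehat{x},\widehat{y})$, and then average over $\scalefactor\sim\df$. You are in fact more careful than the paper, spelling out feasibility of $(\widehat{x},\widehat{y})$ for $\mathcal{L}$ and flagging explicitly that the rounding of~\cite{jrp-deadlines-icalp} must bound $\coststem$ and $\costbranch$ separately against the given fractional solution; the paper simply asserts this without comment.
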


\begin{proof}
We analyze the output $(x,y)$ of Algorithm~{\ALGLPS} for a fixed $\scalefactor \in (0,1]$.
By Step~3, $\coststem(\widehat{x}, \widehat{y}) \leq (1/\scalefactor) \cdot \coststem(x^*,y^*)$
and by Step~5, $\coststem(x,y) \leq \lambda \cdot \coststem(\widehat{x},\widehat{y})$.
Thus, $\coststem(x,y) \leq (\lambda / \scalefactor) \cdot \coststem(x^*,y^*)$. 
By integrating the estimate above 
over the probability distribution of $\scalefactor$, we immediately obtain 
the first property of the lemma. 
The proof for the second property is analogous.
\end{proof}

\begin{lemma}
\label{lem:lps_shipping}
Fix any $\scalefactor \in (0,1]$ and let $(x,y)$ be the solution returned by 
Algorithm~{\ALGLPS} for this fixed $\scalefactor$.
Fix also an order $\pi = (\rho,a,h) \in \Pi$. Let $\shipment(\pi) \geq a$ be the time 
of the shipment that satisfies $\pi$ in $(x,y)$, that is
$y_{\pi,\shipment(\pi)} = 1$ and $y_{\pi,t} = 0$ for $t\neq\shipment(\pi)$.
Then  $\sum_{t \geq \shipment(\pi)} y^*_{\pi,t} \geq 1 - \scalefactor$.
\end{lemma}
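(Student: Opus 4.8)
The plan is to route the estimate through the deadline that Algorithm~{\ALGLPS} assigns to $\pi$ in Step~4. Denote this deadline by $d_\pi$, so that $d_\pi$ is the first time $t'$ with $\sum_{t\le t'}\widehat y_{\pi,t}\ge 1$; since the values $\widehat y_{\pi,\cdot}$ are supported on $\{t\in T:t\ge a\}$, we have $d_\pi\in T$ and $d_\pi\ge a$. Because $(x,y)$ is a feasible schedule for the $\JRPD$ instance $\mathcal L$ (it is returned by the $\lambda$-approximation algorithm run on $\mathcal L$), it serves $\pi$ by its deadline, i.e. $\shipment(\pi)\le d_\pi$; hence $\sum_{t\ge\shipment(\pi)}y^*_{\pi,t}\ge\sum_{t\ge d_\pi}y^*_{\pi,t}$. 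Since $(x^*,y^*)$ satisfies~\eqref{eq:lp_y} with equality, $\sum_{t\ge a}y^*_{\pi,t}=1$, so it suffices to prove $\sum_{a\le t<d_\pi}y^*_{\pi,t}\le\scalefactor$, with all sums here and below ranging over $t\in T$.

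Next I would pin down $d_\pi$ in terms of $\widehat x$. For every $t\ge a$ we have $\widehat x_{\rho,t}=\min\{1,x^*_{\rho,t}/\scalefactor\}\ge y^*_{\pi,t}$: indeed $x^*_{\rho,t}/\scalefactor\ge x^*_{\rho,t}\ge y^*_{\pi,t}$ (the first inequality since $\scalefactor\le 1$, the second being an LP constraint), and also $1\ge y^*_{\pi,t}$ because $\sum_{t\ge a}y^*_{\pi,t}=1$. In particular $\sum_{t\ge a}\widehat x_{\rho,t}\ge 1$, so the greedy filling of Step~3 really places total mass $1$ on $\pi$ and $d_\pi$ is well defined. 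Writing $\widehat X_\rho(t)=\sum_{a\le t'\le t}\widehat x_{\rho,t'}$, the greedy rule — fill the earliest admissible slots up to their caps $\widehat x_{\rho,\cdot}$ until the accumulated mass equals $1$ — gives the identity $\sum_{a\le t'\le t}\widehat y_{\pi,t'}=\min\{1,\widehat X_\rho(t)\}$. Consequently $d_\pi$ is exactly the first time $t$ with $\widehat X_\rho(t)\ge 1$, and therefore $\widehat X_\rho(t)<1$ for every $t\in T$ with $a\le t<d_\pi$.

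The key step then exploits the truncation in the definition of $\widehat x$. Fix $t\in T$ with $a\le t<d_\pi$. Since $\widehat x_{\rho,t}$ is one of the non-negative summands of $\widehat X_\rho(t)<1$, we get $\widehat x_{\rho,t}<1$, and as $\widehat x_{\rho,t}=\min\{1,x^*_{\rho,t}/\scalefactor\}$ this forces $x^*_{\rho,t}/\scalefactor<1$, hence $\widehat x_{\rho,t}=x^*_{\rho,t}/\scalefactor$ exactly. Summing this equality over all $t\in T$ with $a\le t<d_\pi$ (an empty sum, hence $0$, in case $d_\pi=a$), we get $\sum_{a\le t<d_\pi}x^*_{\rho,t}=\scalefactor\cdot\sum_{a\le t<d_\pi}\widehat x_{\rho,t}<\scalefactor$, the last inequality because this partial sum of $\widehat X_\rho$ is still strictly below $1$ and $\scalefactor>0$. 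Combining with the LP constraint $y^*_{\pi,t}\le x^*_{\rho,t}$ yields $\sum_{a\le t<d_\pi}y^*_{\pi,t}<\scalefactor$, which is what we needed; tracing the inequalities back, $\sum_{t\ge\shipment(\pi)}y^*_{\pi,t}\ge\sum_{t\ge d_\pi}y^*_{\pi,t}=1-\sum_{a\le t<d_\pi}y^*_{\pi,t}>1-\scalefactor$.

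The one part that needs a genuine argument rather than a one-liner is the greedy-filling identity $\sum_{a\le t'\le t}\widehat y_{\pi,t'}=\min\{1,\widehat X_\rho(t)\}$ and the ensuing characterisation of $d_\pi$ as the first $t$ with $\widehat X_\rho(t)\ge 1$; once that is established, the cap $\min\{1,\cdot\}$ in $\widehat x=\min\{1,x^*/\scalefactor\}$ converts ``$\widehat X_\rho(t)<1$ before the deadline'' into the desired factor $\scalefactor$ essentially for free.
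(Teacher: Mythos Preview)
Your proposal is correct and follows the same route as the paper: the bound is routed through the deadline $d_\pi$ inserted in Step~4, together with the fact that any feasible solution to the $\JRPD$ instance $\mathcal{L}$ satisfies $\shipment(\pi)\le d_\pi$. The paper's own proof is a two-sentence sketch that simply asserts the deadlines are placed so that at least a $1-\scalefactor$ fraction of $\pi$ remains in $(x^*,y^*)$ at $d_\pi$; your argument supplies the detailed justification (the greedy-filling identity $\sum_{a\le t'\le t}\widehat y_{\pi,t'}=\min\{1,\widehat X_\rho(t)\}$ and the observation that the truncation $\min\{1,\cdot\}$ is inactive before $d_\pi$) that the paper leaves implicit.
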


\begin{proof}
The solution $(x,y)$ obeys the deadlines of instance $\mathcal{L}$.
The deadlines are set exactly
to satisfy the bound of the lemma, i.e., for each order $\pi$ at least
$1 - \scalefactor$ fraction of $\pi$ is still to be sent in $(x^*,y^*)$ at the
time of the deadline inserted for $\pi$.
\end{proof}


\begin{observation}
\label{obs:lps_pace}
Algorithm~{\ALGLPS} has a shipping pace $\shippacec \equiv \df$.
\end{observation}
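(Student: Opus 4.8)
The plan is to unpack both definitions---the shipping pace from \eqref{eq:pace} and the construction of Algorithm~{\ALGLPS}---and check that the probabilistic bound required of a shipping pace follows directly from \lref[Lemma]{lem:lps_shipping} together with the way the scaling factor $\scalefactor$ is drawn. Concretely, fix an order $\pi = (\rho,a,h)$ and a value $\alpha \in [0,1)$. We must lower-bound the probability that Algorithm~{\ALGLPS} ships to $\rho$ at some time $t \in [a, \firsttimelp(\pi,\alpha)]$ by $\int_0^\alpha \df(z)\,\dz$. The natural strategy is to show that this event occurs whenever the random scaling factor satisfies $\scalefactor < \alpha$ (or perhaps $\scalefactor \le \alpha$; the boundary case needs a moment's care), since then the claimed inequality follows immediately because $\Pr[\scalefactor \le \alpha] = \int_0^\alpha \df(z)\,\dz$.

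First I would invoke \lref[Lemma]{lem:lps_shipping}: conditioned on a fixed value of $\scalefactor$, the shipment $\shipment(\pi)$ that satisfies $\pi$ in the returned solution obeys $\sum_{t \ge \shipment(\pi)} y^*_{\pi,t} \ge 1 - \scalefactor$. Next I would compare this with the definition \eqref{eq:lp_time} of $\firsttimelp(\pi,\alpha)$, which is the first time $t \ge a$ with $\sum_{t' \ge t} y^*_{\pi,t'} \le 1 - \alpha$. If $\scalefactor < \alpha$, then $1 - \scalefactor > 1 - \alpha$, and the ``residual mass'' condition $\sum_{t \ge \shipment(\pi)} y^*_{\pi,t} \ge 1 - \scalefactor > 1 - \alpha$ means that $\shipment(\pi)$ cannot be at or after $\firsttimelp(\pi,\alpha)$---more precisely, $\shipment(\pi) < \firsttimelp(\pi,\alpha)$ by monotonicity of the partial sums $\sum_{t' \ge t} y^*_{\pi,t'}$ in $t$. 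Also $\shipment(\pi) \ge a$ by definition. Hence the algorithm does ship to $\rho$ in the interval $[a, \firsttimelp(\pi,\alpha))$, and a fortiori in $[a, \firsttimelp(\pi,\alpha)]$. Integrating over the distribution of $\scalefactor$, the probability of this shipment event is at least $\Pr[\scalefactor < \alpha] = \int_0^\alpha \df(z)\,\dz$, which is exactly the shipping-pace inequality \eqref{eq:pace} with $\shippacec \equiv \df$. It remains only to confirm that $\df$ is a valid shipping pace function, i.e.\ integrable and normalized, $\int_0^1 \df(z)\,\dz = 1$; this holds because $\df$ is by construction a probability density on $(0,1]$.

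The one subtlety---and the main (though minor) obstacle---is the handling of boundary cases and the direction of the strict inequalities, in particular whether $\firsttimelp(\pi,\alpha)$ is attained as a minimum and whether $\shipment(\pi)$ could coincide with it when $\scalefactor = \alpha$ exactly. Since $\firsttimelp$ is defined via a $\min$ over times in the discrete set $T$ and the residual-mass function is a step function, the minimum is attained, and the event $\{\scalefactor < \alpha\}$ (a set of measure $\int_0^\alpha \df$, since isolated points carry no mass once $\df$ is a density) already suffices to force $\shipment(\pi)$ strictly before $\firsttimelp(\pi,\alpha)$; the $\scalefactor = \alpha$ case need not be argued separately. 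I would phrase the proof to sidestep this by working with the open event and noting that the closed interval in \eqref{eq:pace} only makes the bound easier. With that care taken, the proof is essentially a two-line deduction from \lref[Lemma]{lem:lps_shipping} plus the definition of $\firsttimelp$.
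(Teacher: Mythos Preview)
Your proposal is correct and follows essentially the same approach as the paper: the paper's proof is a terse two-sentence deduction from \lref[Lemma]{lem:lps_shipping}, noting that for each fixed $\scalefactor$ the algorithm ships for $\pi$ no later than when a $\scalefactor$-fraction of $\pi$ is satisfied in $(x^*,y^*)$, so $\df$ is a shipping pace. Your write-up simply unpacks this in full, including the comparison with the definition of $\firsttimelp$ and the boundary-case discussion, which the paper leaves implicit.
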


\begin{proof}
By \lref[Lemma]{lem:lps_shipping}, for any $\pi$ and $\scalefactor$,
Algorithm~{\ALGLPS} plans a shipment for $\pi$ no later than when
a fraction $\scalefactor$ of $\pi$ is satisfied by $(x^*,y^*)$.
Thus $\df$ is indeed a shipping pace of the algorithm.
\end{proof}


\mymedparagraph{Combining Algorithms~{\ALGOSRP} and {\ALGLPS}.}
Algorithm~{\ALGCOMB} simply runs  Algorithm~{\ALGOSRP}
with probability $p$ and Algorithm~{\ALGLPS} with probability $1-p$.  (Recall that we
still have to choose parameter $c$ in Algorithm~{\ALGOSRP} and the
probability  density $\df$ in Algorithm~{\ALGLPS}.) We observe that such
an algorithm has pace $\shippacebc \equiv p \cdot \shippaceb + (1-p) \cdot \shippacec$. The
following result is an immediate consequence of (i) using
\lref[Lemma]{lem:a1}.b and \lref[Lemma]{lem:lps_costs} to estimate the total
warehouse and retailer shipping costs, (ii) applying
\lref[Lemma]{lem:pace_to_waiting_cost} 
and \lref[Observation]{obs:lps_pace} 
to estimate the waiting costs, and
(iii) using the inequality  $\sup_{w \in [0,1]} (\int_w^1 \shippace(z) \, \dz) / (1-w)
\leq \sup_{z \in [0,1]} \shippace(z)$.


\begin{lemma}
\label{lem:alg_BC}
Let $\I = \int_0^1 \frac{1}{z} \cdot \df(z) \,\mathrm{d}z$. 
Algorithm {\ALGCOMB} computes an integral solution $(x,y)$ that is a $(r_1,r_2,r_3)$-approximation 
of the optimal fractional solution $(x^*,y^*)$, where
\begin{equation*}
	r_1 = p/c + (1-p) \lambda \I,
	\quad\quad
	r_2 = p/(1-c) + (1-p) \lambda \I,
	\quad
	\textrm{and}
	\quad\quad
	r_3 = \sup_{z \in [0,1]} \shippacebc(z).
\end{equation*}
\end{lemma}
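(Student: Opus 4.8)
The plan is to assemble the three bounds of the lemma from results already established in the excerpt, treating each of the three cost components separately and then invoking linearity of expectation over the random choice between Algorithms~{\ALGOSRP} and {\ALGLPS}.

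First I would handle the warehouse shipping cost $\coststem$. Algorithm~{\ALGCOMB} runs {\ALGOSRP} with probability $p$ and {\ALGLPS} with probability $1-p$. By \lref[Lemma]{lem:a1}.b, {\ALGOSRP} with parameter $c$ is a $(\tfrac1c,\tfrac1{1-c},\tfrac1{1-c})$-approximation, so its $\coststem$ is at most $(1/c)\cdot\coststem(x^*,y^*)$; by \lref[Lemma]{lem:lps_costs}, {\ALGLPS} has $\coststem$ at most $\lambda\I\cdot\coststem(x^*,y^*)$. Since the expected $\coststem$ of {\ALGCOMB} is the $p$-weighted average of these two quantities, we get $\coststem(x,y)\le\big(p/c+(1-p)\lambda\I\big)\cdot\coststem(x^*,y^*)$, which is exactly $r_1$. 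The same argument with the retailer shipping cost $\costbranch$ and the second coordinates of the approximation guarantees ($\tfrac1{1-c}$ from \lref[Lemma]{lem:a1}.b and $\lambda\I$ from \lref[Lemma]{lem:lps_costs}) yields $r_2=p/(1-c)+(1-p)\lambda\I$.

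For the waiting cost I would argue as follows. By \lref[Observation]{obs:1SRP-pace}, {\ALGOSRP} has shipping pace $\shippaceb$, and by \lref[Observation]{obs:lps_pace}, {\ALGLPS} has shipping pace $\shippacec\equiv\df$. The shipping-pace inequality \eqref{eq:pace} is preserved under convex combination: if {\ALGCOMB} ships by running {\ALGOSRP} with probability $p$ and {\ALGLPS} with probability $1-p$, then the probability it ships a given retailer by a given deadline is the $p$-weighted average of the two individual probabilities, each of which is at least the integral of the respective pace; hence $\shippacebc\equiv p\cdot\shippaceb+(1-p)\cdot\shippacec$ is a valid shipping pace for {\ALGCOMB} (note that it integrates to $1$ since each summand does). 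Then \lref[Lemma]{lem:pace_to_waiting_cost} gives $\costwait(x,y)\le\costwait(x^*,y^*)\cdot\sup_{w\in[0,1)}\tfrac{1}{1-w}\int_w^1\shippacebc(z)\,\dz$, and the elementary inequality $\tfrac{1}{1-w}\int_w^1 g(z)\,\dz\le\sup_{z\in[w,1]}g(z)\le\sup_{z\in[0,1]}g(z)$ (the average of $g$ over $[w,1]$ never exceeds its supremum there) bounds the supremum by $\sup_{z\in[0,1]}\shippacebc(z)=r_3$.

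The main obstacle — and it is a modest one — is the verification that shipping pace behaves correctly under the probabilistic mixture of the two algorithms: one must be careful that \eqref{eq:pace} is a statement about a single probability (of the mixed algorithm shipping to retailer $\rho$ within a window), so that linearity of expectation applies cleanly, rather than something that would require independence or joint structure. Once that is observed, the rest is just bookkeeping: collecting the three component bounds and noting that the overall statement is precisely the definition of an $(r_1,r_2,r_3)$-approximation, with the three quantities as claimed. No genuinely new estimate is needed beyond what \lref[Lemma]{lem:a1}, \lref[Lemma]{lem:lps_costs}, \lref[Lemma]{lem:pace_to_waiting_cost}, and the two observations already supply.
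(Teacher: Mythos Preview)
Your proposal is correct and follows essentially the same approach as the paper: the paper likewise derives $r_1,r_2$ by taking the $p$-weighted combination of \lref[Lemma]{lem:a1}.b and \lref[Lemma]{lem:lps_costs}, asserts (as you carefully justify) that the mixed algorithm has pace $\shippacebc = p\,\shippaceb + (1-p)\,\shippacec$, and then obtains $r_3$ from \lref[Lemma]{lem:pace_to_waiting_cost} together with the elementary inequality $\sup_{w}\frac{1}{1-w}\int_w^1 G(z)\,\dz \le \sup_z G(z)$. If anything, your proof is slightly more explicit than the paper's about why the convex combination of shipping paces is again a valid pace for the mixture.
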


\begin{figure}[t]
\begin{center}
\vspace{-0.35in}
\includegraphics[width=0.75\textwidth]{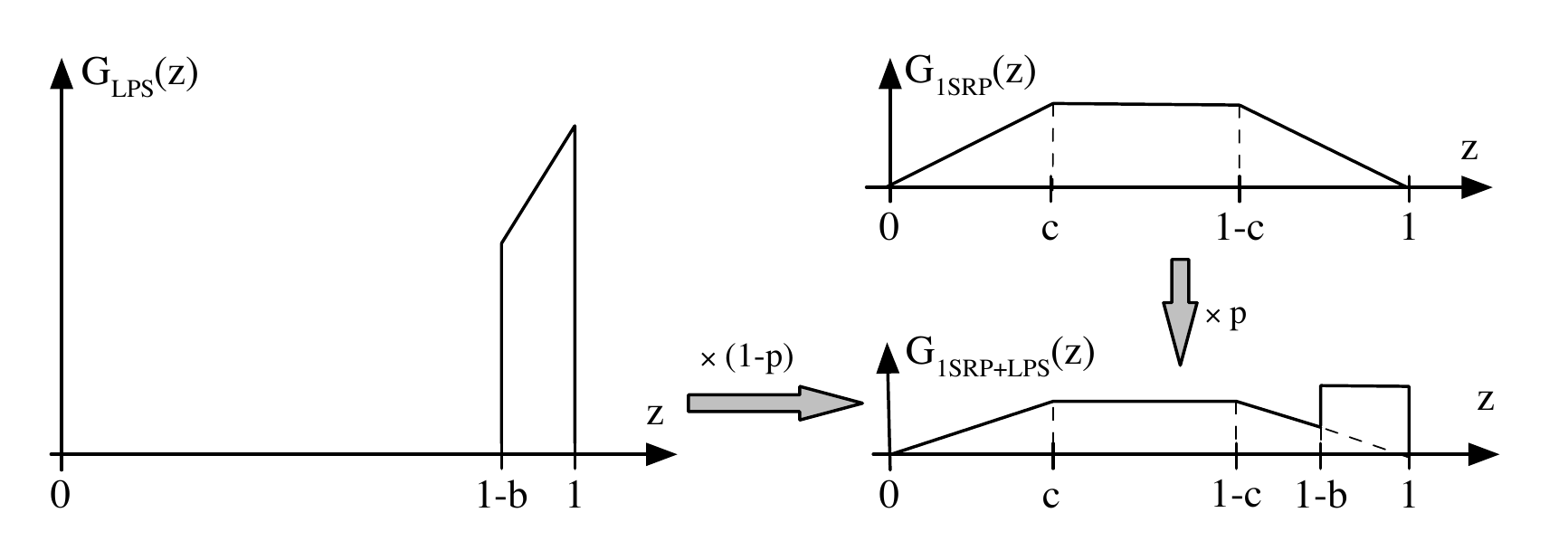}
\vspace{-0.35in}
\end{center}
\caption{Shipping paces of Algorithms~{\ALGOSRP}, {\ALGLPS} and {\ALGCOMB}.}
\label{fig:pace}
\vspace{-0.07in}
\end{figure}


\medskip

Our next step is to choose a probability density function $\df$ in Algorithm~{\ALGLPS}. 
This choice affects the approximation ratio in two ways.
On the one hand, we want the value of $\I$ in~\lref[Lemma]{lem:alg_BC}
to be as small as possible.
To this end, the probability mass should be accumulated close point $1$.
On the other hand, we need to take into account that
Algorithm~{\ALGCOMB} approximates the waiting cost within the factor 
$\sup_{z \in [0,1]} \shippacebc(z)$,
where $\shippacebc \equiv p \cdot \shippaceb + (1-p) \cdot \shippacec = 
 p \cdot \shippaceb + (1-p) \cdot \df$.

Therefore, we choose $\df$ to be supported on the interval $[1-b,1]$,
where $b \leq c$ is a parameter that we will fix later.
Furthermore, we choose $\df$ to be such an increasing linear function
that the resulting function $\shippacebc$ is constant on $[1-b,1]$, 
cf.~\lref[Figure]{fig:pace}.
To this end, we require that{\emdash}within the interval $[1-b,1]${\emdash}the 
slope of the function $(1-p) \cdot \shippacec \equiv (1-p) \cdot \df$ 
matches the negated slope of the function $p \cdot \shippaceb$.
These considerations imply that once we fix parameters $p$, $c$ and $b$, the probability density $\df$ 
should be 
\begin{equation}
\df(z) = \alpha \cdot z + \frac{1}{b} + \frac{\alpha \cdot b}{2} - \alpha \enspace,
\quad\textrm{where} \quad
\alpha = \frac{p}{(1-p) \cdot c \cdot (1-c)}
\enspace,
\end{equation}
for $z \in [1-b,1]$ and zero outside of this interval.
Straightforward calculations verify that $\df$ is indeed a probability density, i.e., 
$\int_0^1 \df(x) \, \mathrm{d}x = 1$.
Furthermore,  $\shippacebc \equiv p \cdot \shippaceb + (1-p) \cdot \shippacec$
is constant on the interval $[1-b,1]$ and its value there is equal to 
$\shippacebc(1) = (1-p) \cdot \df(1) = (1-p)/b + p b / (2 c (1-c))$.
Thus
\begin{equation*}
\I \; 
    =  \; \int_0^1 \frac{1}{z} \cdot \df(z) \,\mathrm{d}z \\
    =  \; \int_{1-b}^1 \alpha \,\mathrm{d}z + 
        \left(\frac{1}{b} + \frac{\alpha \cdot b}{2} - \alpha \right) \int_{1-b}^1 \frac{1}{z} \,\mathrm{d}z \\
    = \; \alpha \cdot b - \left(\frac{1}{b} + \frac{\alpha \cdot b}{2} - \alpha \right) \cdot \ln (1-b)
    \enspace.
\end{equation*}

We numerically optimize the parameters $p$, $c$ and $b$. Specifically, we choose 
$p=0.822599$, $c=0.342538$ and $b=0.136366$. For those 
values the maximum of function $\shippacebc$ is achieved in the interval $[1-b,1]$. 
and is at most $1.549968$.
Using the bounds of \lref[Lemma]{lem:alg_BC}, we conclude that 
Algorithm~{\ALGCOMB} is an
$(R_1,R_2,R_2)$-approximation of $(x^*,y^*)$, where $R_1 \leq 2.700277$ and $R_2 \leq 1.549968$. 


\mymedparagraph{Making ends meet: combining all algorithms.}
Finally, we combine algorithm {\ALGCOMB} with algorithm {\ALGTSRP}. The resulting algorithm {\ALGFINAL}
uses {\ALGTSRP} with probability $(R_1-R_2)/(R_1-R_2+1)$ and {\ALGCOMB} with probability $1/(R_1-R_2+1)$. 
Such algorithm is a
$(R,R,R)$-approximation (of the fractional optimal solution), where 
\[
    R = \frac{2 \cdot R_1 - R_2}{R_1 - R_2 + 1} \leq 1.790713
    \enspace.
\]
We therefore obtained the following result. 

\begin{theorem}
There is a polynomial-time $1.791$-approximation algorithm for $\JRP$.
\end{theorem}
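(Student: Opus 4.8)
The algorithm is {\ALGFINAL}, run on the fixed optimal fractional LP solution $(x^*,y^*)$; the goal is to show it outputs an integral solution (equivalently, a schedule) whose expected cost is at most $R\cdot\costtotal(x^*,y^*)$ for some $R<1.791$. Since the linear program is a relaxation of the natural integer program for $\JRP$, we have $\costtotal(x^*,y^*)\le\OPT$, so such a bound yields an $R$-approximation; the randomization — the three top-level choices, the shifts inside {\ALGOSRP}, and the scaling parameter $\zeta$ inside {\ALGLPS} — ranges, for a fixed instance, over only polynomially many essentially distinct outcomes and can be removed by standard derandomization, giving a deterministic polynomial-time algorithm. So everything reduces to (i) bounding the three-coordinate guarantee of {\ALGCOMB}, and (ii) mixing {\ALGCOMB} with {\ALGTSRP} so that all three coordinates collapse to a single value below $1.791$.

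For (i) I would invoke Lemma~\ref{lem:alg_BC}: running {\ALGOSRP} with probability $p$ and {\ALGLPS} with probability $1-p$ gives an $(r_1,r_2,r_3)$-approximation of $(x^*,y^*)$ with $r_1=p/c+(1-p)\lambda\I$, $r_2=p/(1-c)+(1-p)\lambda\I$, and $r_3=\sup_{z\in[0,1]}\shippacebc(z)$, where $\lambda\approx1.574$ is the $\JRPD$ ratio of \cite{jrp-deadlines-icalp}. Here $\df$ is the increasing affine function on $[1-b,1]$ (zero outside) whose slope cancels the negated slope of $p\cdot\shippaceb$, i.e.\ $\df(z)=\alpha z+1/b+\alpha b/2-\alpha$ with $\alpha=p/((1-p)c(1-c))$; one checks $\int_0^1\df=1$, that $\shippacebc\equiv p\cdot\shippaceb+(1-p)\cdot\df$ is constant on $[1-b,1]$ with value $\shippacebc(1)=(1-p)/b+pb/(2c(1-c))$, and that $\I=\int_0^1 z^{-1}\df(z)\,\mathrm{d}z=\alpha b-(1/b+\alpha b/2-\alpha)\ln(1-b)$. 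Plugging in $p=0.822599$, $c=0.342538$, $b=0.136366$ — for which $0<b\le c\le 1/2$ and $p\in(0,1)$, as needed by Lemma~\ref{lem:a1}.b and by the construction of $\df$ — I would verify numerically that the supremum defining $r_3$ is attained inside $[1-b,1]$, hence equals $\shippacebc(1)\le1.549968$, and that $r_1\le2.700277$ and $r_2\le1.549968$. Setting $R_1:=2.700277\ge r_1$ and $R_2:=1.549968\ge\max(r_2,r_3)$, this shows {\ALGCOMB} is an $(R_1,R_2,R_2)$-approximation of $(x^*,y^*)$.

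For (ii), note $R_1>R_2$ (because $c<1/2$), so the bottleneck of {\ALGCOMB} is the supplier-to-warehouse cost, whereas {\ALGTSRP} is a $(1,2,2)$-approximation (Lemma~\ref{lem:a1}.a) — cheap on the warehouse, expensive elsewhere. Running {\ALGTSRP} with probability $q:=(R_1-R_2)/(R_1-R_2+1)$ and {\ALGCOMB} with probability $1-q=1/(R_1-R_2+1)$ gives, coordinatewise, the factors $q+(1-q)R_1$ (warehouse) and $2q+(1-q)R_2$ (retailer and waiting); the choice of $q$ makes these equal, with common value $R=(2R_1-R_2)/(R_1-R_2+1)$, which a direct evaluation bounds by $1.790713<1.791$. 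Hence {\ALGFINAL} is an $(R,R,R)$-approximation of $(x^*,y^*)$, so its expected cost is at most $R\cdot\costtotal(x^*,y^*)\le R\cdot\OPT$, proving the theorem.

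The real work is the numerical verification in part (i): one must confirm that, for the chosen $p,c,b$, the supremum of $\shippacebc$ genuinely sits in $[1-b,1]$ — not at the kink $z=c$ of $\shippaceb$ nor in its flat stretch $[c,1-c)$, equivalently $\shippacebc(1)\ge p/(1-c)$ — so that the clean closed form $\shippacebc(1)$ is the correct bound, and then that the resulting $\I,r_1,r_2$ obey the stated inequalities; this is elementary but must be done with care. Everything else — Lemmas~\ref{lem:a1}, \ref{lem:lps_costs}, \ref{lem:pace_to_waiting_cost}, \ref{lem:alg_BC} and Observation~\ref{obs:lps_pace}, and the polynomial-time solvability of the LP, {\ALGTSRP}, {\ALGOSRP}, and the scaling/deadline reduction of {\ALGLPS} followed by the algorithm of \cite{jrp-deadlines-icalp} — is already established.
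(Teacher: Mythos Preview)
Your proposal is correct and follows essentially the same approach as the paper: you invoke Lemma~\ref{lem:alg_BC} with the same affine density $\df$ on $[1-b,1]$, plug in the identical parameter triple $(p,c,b)=(0.822599,\,0.342538,\,0.136366)$ to obtain $(R_1,R_2)=(2.700277,\,1.549968)$, and then mix with {\ALGTSRP} at probability $(R_1-R_2)/(R_1-R_2+1)$ to collapse the three coordinates to $R=(2R_1-R_2)/(R_1-R_2+1)\le 1.790713$. The only addition is your remark on derandomization, which the paper does not discuss (it is content with a randomized algorithm meeting the bound in expectation); your claim that the randomness has only polynomially many essentially distinct outcomes is standard for these LP-rounding schemes but would need a line or two of justification if you want to state the result for a deterministic algorithm.
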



\section{A Lower Bound of 2.754 for Online {\JRPL}}
\label{sec: lower bound 2.754 for online jrlp}

We now show our lower bound of $2.754$ for the competitive ratios for $\JRP$,
which improves the previous upper bound of $2.64$ by
Buchbinder~{\etal}~\cite{jrp-online-buchbinder}. Since we use only linear
waiting-cost functions in our construction, as in~\cite{jrp-online-buchbinder},
our result applies to $\JRPL$ as well.


\myparagraph{Single-phase game.} In our lower-bound proof it will be convenient to 
consider a simple version of $\JRPL$ that we refer to as the \emph{Single-Phase $\JRPL$}.
In the Single-Phase $\JRPL$ all orders are issued at the beginning at time $0$.
The waiting cost is assumed to be linear.
In addition to the set of retailers and orders, the instance specifies also
an \emph{expiration time} $\theta$. At time $\theta$ all orders expire:
they need not be satisfied anymore, but each incurs the waiting cost $h(\theta)=\theta$.
Note that all information about the instance is known to the online algorithm,
except for $\theta$, which represents the adversary strategy.
Thus the Single-Phase $\JRPL$
is in fact a generalization of the well studied rent-or-buy problem.

We claim that a lower bound of $R$ for Single-Phase $\JRPL$ implies a lower
bound of $R$ for $\JRPL$ (and thus for $\JRP$ as well). 
Since a similar argument appeared before 
in \cite{jrp-online-buchbinder,online-control-wads}, we only briefly sketch the proof of this claim.
Suppose that we have an adversary strategy that forces ratio $R$ for
Single-Phase $\JRPL$. We modify it into an adversary strategy that
forces the same ratio for $\JRPL$. This strategy creates a large number of
single-phase instances, concatenated together, with the $i$-th
instance scaled by a factor of $M^i$, for some very large $M$, in the
following sense: each order is replaced by $M^i$ identical orders and the
time is accelerated by a factor of $M^i$ as well. By accelerating the time
we mean that all time values used to make decisions in the strategy
are multiplied by $M^{-i}$. The adversary applies the same strategy in 
each phase, forcing ratio $R$ for each phase. Each phase may produce
some number of non-satisfied orders, but these can be satisfied by
one shipment for all retailers at the end of the game.
This will add only a constant to the adversary shipment cost and, since
the phase lengths are decreasing so fast, the increase of the 
adversary's waiting cost will be also negligible.


\myparagraph{Single-phase construction.}
We will use an instance of Single-Phase~{\JRPL} with
$N+1$ retailers in $\retailers$, denoted $\rho_0,\rho_1,\ldots,\rho_N$.
The costs of shipping from the warehouse to each of them are as follows:
$\cost_{\rho_0}=\cost_0=0$ and $\cost_{\rho_i} = \cost$ for all $i>0$
and $\cost$ that we fix later.  These are normalized so that $\COST=1$,
i.e., the cost of shipping from the supplier to the warehouse is $1$.

Each retailer $\rho_i$ places $w_i$ identical orders $(\rho_i,0,h_i)$ at time $0$,
where $h_i(t) = t$ for all $i$. Equivalently, we can view this as issuing a
single order $\pi_i=(\rho_i,0,h'_i)$ with weight $w_i$, that is with 
waiting cost function  $h'_i(t) = w_i\cdot t$. We will
adapt this terminology in this section. We choose the weights to be quickly decreasing, that is
$w_i \gg w_{i+1}$ for all $i < N$, so that the slopes of the functions $h'_i$ are decreasing
rapidly with $i$. As a result, in the proof below, when the algorithm satisfies
an order $\pi_i$, the waiting costs (of the algorithm and the adversary)
of all orders $\pi_{i+1},\pi_{i+2},...$ will
be negligible. For clarity, in the calculations below we will assume these costs to be $0$. 
(By adjusting the weights appropriately, we can make these costs at most an 
arbitrarily small $\epsilon$, and then our lower bound will approach $R$.)

Let  $\calA$ be an online algorithm for Single-Phase~{\JRPL}.
To describe the adversary strategy, we first normalize the way $\calA$ proceeds.
Using a simple exchange argument, it is
easy to show that, without loss of generality,
$\calA$ satisfies all the demands in increasing order of their
indices, i.e., if $i < j$ then $\pi_i$ is satisfied earlier than or together with $\pi_j$.
Then, the adversary stops the game the moment that $\calA$ satisfies more than one
order with a single shipment. (``Stopping'' means that the expiration time $\theta$
is set to the current time.) To complete the strategy's description we can
thus focus on $\calA$ satisfying orders $\pi_0,\pi_1,\ldots$ in this order,
each with a~dedicated shipment.  If the waiting cost associated with $\pi_i$
at the moment of its satisfaction is smaller than a~certain threshold value
$\sigma_i$, the game ends, otherwise it continues.  This means that as long
as the game did not end at of before $\calA$'s shipment satisfying $\pi_i$,
$\calA$'s cost for these shipments is at least $\sum_{j=0}^{i} (1+\cost_j + \sigma_j)$.
In particular, if the game does not end due to any aforementioned reason at
or before the time that $\calA$ satisfies $\pi_N$, then the game ends
\emph{naturally} with this shipment; otherwise we say that the game ends
\emph{prematurely}.

As was the case with $\cost_i$'s, all the thresholds $\sigma_i$ coincide and
are denoted $\sigma$, with the exception of $\sigma_0$.  We now give the values
of all the parameters.  We let $\cost$ be the only real root of
\begin{equation}\label{eq:c-equation}
 \cost^2(\cost+1)=1 \enspace,
\end{equation}
%
and
\begin{equation}\label{eq: sigma_0 and sigma}
 \sigma_0 \equiv \frac{1}{\cost+1} = \cost^2 \enspace,
	\quad\quad\quad
 \sigma \equiv \sigma_0^2 = \cost^4 = \cost^2 + \cost -1 \enspace,
\end{equation}
where the identities follow from~\eqref{eq:c-equation}.
We have $\cost\approx 0.7548$, 
$\sigma_0 \approx 0.5698$ and $\sigma\approx 0.3247$.

We claim that unless the game ends naturally, the competitive ratio of $\calA$
is at least $R=2+\cost$.  To see this, let us consider all the ways
in which the game can end prematurely.

Let $\omega$ be $\calA$'s waiting cost of $\pi_0$ when it satisfies $\pi_0$. If $\omega < \sigma_0$
then $\calA$'s cost is at least $1+\omega$, whereas {\OPT} can pay the waiting cost $\omega$ alone,
resulting in ratio no smaller than
\[
 1+\frac{1}{\sigma_0} = 2+\cost = R \enspace.
\]

If $\calA$ satisfies $\pi_0$ together with another order
by a single shipment, then $\calA$'s cost is at least $1+\cost+\omega$ whereas {\OPT} will either
pay the waiting cost $\omega$ for $\pi_0$ or $1$ for satisfying $\pi_0$ at time $0$.  Thus
the competitive ratio is at least
\[
 \frac{1+\cost+\omega}{\min\{1,\omega\}} \geq \frac{2+\cost}{1} = 2+\cost = R \enspace.
\]

Now we consider analogous two cases regarding the shipment for $\pi_i$, where $i\ge 1$,
assuming that the game did not end before. This means that $\calA$ already suffered a~cost of at least
$\sigma_0+1 +(i-1)(\sigma+1+\cost)$ 
associated with satisfying orders $\pi_0,...,\pi_{i-1}$,
plus some additional cost associated with satisfying $\pi_i$.
Let now $\omega$ denote the waiting cost of $\pi_i$ when $\calA$ satisfies $\pi_i$.

If $\omega < \sigma$
then {\OPT} satisfies the orders $\pi_j$ for all $j<i$ with a single shipment at time $0$,
and pays the waiting cost $\omega$ for $\pi_i$.  The competitive ratio is at least
\begin{equation*}
 \frac{ \sigma_0+1 +(i-1)(\sigma+1+\cost) + \omega+1+\cost}{1+(i-1)\cost+\omega}
 = 1+\frac{i+\cost+\sigma_0+(i-1)\sigma}{1+(i-1)\cost+\omega}
 \geq 1+\frac{i+\cost+\sigma_0+(i-1)\sigma}{1+(i-1)\cost+\sigma} \enspace,
\end{equation*}
which after substituting formulas~\eqref{eq: sigma_0 and sigma} for $\sigma_0$ and $\sigma$,
as well as using~\eqref{eq:c-equation}, becomes
\begin{equation*}
 1 + \frac{1+i\cost+i\cost^2}{i\cost+\cost^2} = 2 + \frac{1+i\cost^2 - \cost^2}{\cost(i+\cost)}
 = 2+ \frac{i\cost^2+\cost^3}{\cost(i+\cost)} = 2 + \cost = R \enspace.
\end{equation*}

Let us consider the remaining case in which $\calA$ satisfies another order together with $\pi_i$.
In this case {\OPT} satisfies all the previous orders with a~single shipment at time $0$; as for
$\pi_i$, {\OPT} either satisfies it with that shipment as well, or pays the waiting cost $\omega$ for $\pi_i$,
whichever is cheaper.  Thus the ratio is at least
\begin{align*}
 \frac{ \sigma_0+1 +(i-1)(\sigma+1+\cost) + \omega +1+2\cost}{1+(i-1)\cost+\min\{\cost,\omega\}}
 		&\geq 1+ \frac{i+2\cost+\sigma_0 + (i-1)\sigma}{1+i\cost}\enspace,
\end{align*}
which after substituting formulas~\eqref{eq: sigma_0 and sigma} for $\sigma_0$ and $\sigma$, becomes
\begin{equation*}
 1+ \frac{1+(i+1)\cost+i\cost^2}{1+i\cost} = 1+ \frac{(1+\cost)(1+i\cost)}{1+i\cost} = 2+\cost = R \enspace.
\end{equation*}

Thus the ratio is at least $R$ if the game ends prematurely.  But if it does not, then $\calA$'s cost for
each shipment, except the one for $\pi_0$, is at least $1+\cost+\sigma=\cost^2+2\cost$, 
by \eqref{eq: sigma_0 and sigma}.
On the other hand, {\OPT} satisfies all orders with a~single shipment at time $0$, which costs $1+N\cost$.
With $N\to\infty$, {\OPT}'s cost of $1$ for shipment from the supplier to the warehouse becomes
negligible and {\OPT}'s cost per order tends to $\cost$.  Therefore, the competitive ratio tends to $R=2+\cost$.
Summarizing the above argument, we obtain:

\begin{theorem}\label{thm: online jrp lower bounds}
Each online deterministic algorithm for $\JRPL$ has competitive ratio at least $2.754$.
\end{theorem}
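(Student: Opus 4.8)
The plan is a two-level reduction followed by a case analysis that is balanced so that every branch yields the same target ratio $R=2+\cost$. First I would pass to the \emph{Single-Phase $\JRPL$} game and argue, as sketched in the preliminaries of this section, that an adversary forcing ratio $R$ there forces ratio $R$ for $\JRPL$ (hence for $\JRP$): concatenate many single-phase instances, the $i$-th scaled by $M^i$ (each order replicated $M^i$ times, time accelerated by a factor $M^{-i}$) for a huge constant $M$; because the phase lengths shrink geometrically, the orders left unsatisfied across all phases can be cleared by one final all-retailer shipment, adding only a negligible amount to both the shipment cost and the waiting cost of the adversary. So it suffices to exhibit a single-phase adversary forcing ratio arbitrarily close to $R$.

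For the single-phase construction I would use $N+1$ retailers $\rho_0,\dots,\rho_N$ with $\COST=1$, $\cost_{\rho_0}=0$ and $\cost_{\rho_i}=\cost$ for $i\ge 1$, placing at time $0$ a single weighted order $\pi_i$ per retailer with waiting-cost function $w_i\cdot t$, the weights chosen so rapidly decreasing ($w_0\gg w_1\gg\cdots$) that once $\calA$ serves $\pi_i$ the residual waiting costs of $\pi_{i+1},\pi_{i+2},\dots$ drop below any prescribed $\epsilon$; in the bookkeeping I treat them as $0$ and let $\epsilon\to 0$ at the end. A standard exchange argument lets me assume $\calA$ serves the orders in index order. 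The adversary is reactive: it freezes the expiration time $\theta$ the instant $\calA$ either covers two orders with a single shipment or serves some $\pi_i$ whose accumulated waiting cost is below a threshold $\sigma_i$ (with $\sigma_0$ distinguished and $\sigma_1=\dots=\sigma_N=:\sigma$); if neither event ever occurs, the game ends \emph{naturally} with the shipment serving $\pi_N$.

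The heart of the argument is the case analysis, in each case comparing $\calA$'s forced cost against a tailored $\OPT$. In a premature end at $\pi_0$: if its waiting cost $\omega<\sigma_0$ then $\OPT$ pays only $\omega$, yielding ratio $\ge 1+1/\sigma_0$; if $\pi_0$ is merged with another order then $\OPT$ pays $\min\{1,\omega\}$, yielding ratio $\ge(1+\cost+\omega)/\min\{1,\omega\}$. In a premature end at $\pi_i$ with $i\ge 1$, $\calA$ has already paid at least $\sigma_0+1+(i-1)(\sigma+1+\cost)$, and I let $\OPT$ serve $\pi_0,\dots,\pi_{i-1}$ with one time-$0$ shipment and handle $\pi_i$ by waiting (cost $\omega$) or, in the merge case, by also deciding whether to include $\pi_i$ in that shipment. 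If the game ends naturally, every shipment after the first costs $\calA$ at least $1+\cost+\sigma$, while $\OPT$ pays $1+N\cost\to N\cost$ as $N\to\infty$. The main obstacle — and the reason the constants look as they do — is to choose $\cost$ and the thresholds so that \emph{every} branch, for \emph{every} $i$, collapses to the same value $R$: the natural-end branch pushes $\sigma$ up, the ``$\omega<\sigma$'' branch pushes it down, and forcing the $i$-dependent fractions to be independent of $i$ pins down $\sigma_0=1/(\cost+1)$, $\sigma=\sigma_0^{2}$ and $\cost^2(\cost+1)=1$, i.e.\ $\cost\approx 0.7548$ and $R=2+\cost\approx 2.754$. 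Once these identities are substituted, each ratio telescopes to $2+\cost$ by routine algebra, completing the proof.
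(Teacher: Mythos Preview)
Your proposal is correct and follows essentially the same approach as the paper: the reduction to Single-Phase $\JRPL$, the instance with $N{+}1$ retailers and rapidly decaying weights, the reactive adversary that stops on a merge or a premature (below-threshold) shipment, the specific choices $\sigma_0=1/(\cost+1)$, $\sigma=\sigma_0^{2}$, $\cost^2(\cost+1)=1$, and the case analysis (including the natural-end asymptotics) all match the paper's proof. The only difference is presentational; there is no methodological divergence.
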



\section{Tight Bound of 2 for Online {\JRPD}}
\label{sec: tight bound of 2 for online jrpd}


We now present an online algorithm for $\JRPD$ with competitive ratio $2$, matching the
lower bound that is given in \lref[Appendix]{sec: a lower bound 2 for online jrpd}.
We will denote the shipments of the algorithm by
$(B_1,t_1)$, $(B_2,t_2)$,..., where $t_1\le t_2\le ...$.
The set $B_j$ of retailers participating in the $j$-th shipment
is called the $j$th \emph{batch}. For convenience, we introduce
a ``dummy'' $0$'th shipment at time $t_0 = 0$, which we think of as if it
shipped to all the retailers in the instance at no cost.
(All that matters is that at time $0$ each retailer does not have any pending orders.)
For a retailer $\rho$ and time $t$, we define the \emph{deadline} of $\rho$ to be the
earliest deadline of a pending order in $\rho$. If a retailer does not have any pending orders,
its deadline is $+\infty$.
Without loss of generality, we can assume that
all shipments (of an online algorithm and the adversary)
take place only at deadlines of some retailers.
If $t_j$ is the deadline of a retailer $\rho$ then we say that $\rho$, or the order in $\rho$ with
deadline $t_j$, \emph{triggers} shipment $(B_j,t_j)$. 


\myparagraph{Algorithm $\calG$:}
Suppose that we just completed shipment $(B_{j-1},t_{j-1})$.
We wait until we reach a deadline of a retailer, which will become the trigger retailer for the $j$th shipment.
We denote this retailer by $\chi_{j}$ 
and its deadline by $t_{j}$. At time $t_{j}$ our batch is
$B_{j} = \braced{\chi_{j}}\cup X_{j}$, where $X_{j}$ contains the maximum number
of retailers, in order of increasing deadlines, such that $\cost(X_{j})\le \COST$.

\medskip

If $j=1$ then, according to our convention, $j-1 = 0$ refers to the dummy
shipment at time $t_0 = 0$. Thus the first shipment will occur at the
first deadline of the instance.


\myparagraph{Analysis.}
We now analyze this algorithm. To simplify the analysis we will assume that
all order arrival times and deadlines are different.
The instance can be converted to have this property by an infinitesimal 
perturbation of arrival times and deadlines.

We divide the sequence of shipments into \emph{phases}. A phase is a maximal
interval $[g,h]$ of integers (indices of shipments), where $1 \le g \le h$, 
such that the adversary does not make any
shipments in the time interval $(t_g,t_h]$. In other words:
{(i)} there are no adversary shipments in $(t_g,t_h]$,
{(ii)} the adversary shipped in $(t_{g-1},t_g]$, and
{(iii)} either $t_h$ is the last deadline or the adversary shipped in $(t_h,t_{h+1}]$.
Note that the first phase starts with the first shipment (that is $g=1$). Indeed,
the adversary must ship in the interval $(t_0,t_1]$, because $t_1$ is the first
deadline. The lemma below elucidates a property of
phases that will be critical to our analysis.


\begin{lemma}\label{lem: jrdp phase structure}
Let $[g,h]$ be a phase and $g < j \le h$. Let $\pi$ be the order in $\chi_j$ that triggers
shipment $B_j$. Then $\pi$ was pending at time $t_{j-1}$, and among all orders
pending at time $t_{j-1}$ it was the earliest-deadline order not in $B_{j-1}$.
\end{lemma}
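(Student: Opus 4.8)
The plan is to proceed by induction on $j$, tracking which retailers have no pending orders just after each shipment in the phase. The key invariant I would establish is: for every index $j$ with $g \le j \le h$, at time $t_j^+$ (just after shipment $B_j$) the only retailers that can possibly have a pending order are those whose earliest pending order arrived \emph{after} $t_{j-1}$ — equivalently, every order that was pending at time $t_{j-1}$ and is not in $B_j$ has been cleared, except possibly by an adversary shipment, but there are none in $(t_g, t_h]$. So first I would set up the base of the phase: at $t_{g-1}^+$ everything that was pending is handled either by $B_{g-1}$ or by an adversary shipment in $(t_{g-1}, t_g]$; combined with the phase definition (no adversary shipments in $(t_g, t_h]$), this pins down that all "old" orders — those pending at $t_{g-1}$ — are resolved by the time we enter index $g$, and from then on the only pending orders are fresh arrivals.

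Next I would do the inductive step. Assume the invariant holds at $t_{j-2}^+$ and consider shipment $B_j$ for $g < j \le h$. By the induction hypothesis, every order pending at time $t_{j-1}$ was either in $B_{j-1}$ already (hence not pending at $t_{j-1}^+$, contradiction) or arrived in the window $(t_{j-2}, t_{j-1}]$ — but more to the point, I want to identify $\pi$, the trigger of $B_j$. By the algorithm, $\chi_j$ is the retailer whose deadline (= earliest pending-order deadline) equals $t_j$, and $t_j$ is the \emph{first} deadline reached after $t_{j-1}$. So $\pi$'s deadline is the minimum deadline over all orders pending just after $t_{j-1}$. I must argue two things: (a) $\pi$ was already pending at time $t_{j-1}$ (not merely at $t_{j-1}^+$), and (b) among all orders pending at $t_{j-1}$, $\pi$ has the earliest deadline among those \emph{not} in $B_{j-1}$. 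For (a): if $\pi$ had arrived in $(t_{j-1}, t_j]$, its deadline would be $\ge$ its arrival time $> t_{j-1}$; I'd need to rule out that this could be the earliest deadline triggering $B_j$ when there is also some genuinely-pending-at-$t_{j-1}$ order with an even earlier deadline — here I use that orders not shipped in $B_{j-1}$ and not cleared by the adversary (none in the phase) remain pending, so if any such order had deadline $< \pi$'s deadline it would have triggered the shipment instead, forcing $t_j$ to be that earlier deadline; but then by WLOG distinctness of deadlines that order \emph{is} $\pi$. For (b): by construction $\chi_j$'s order is exactly the earliest-deadline pending order at time $t_j$; since no deadlines in $(t_{j-1}, t_j)$ were hit, nothing was cleared in that open interval, so the set of pending orders at $t_{j-1}^+$ equals the set at $t_j^-$ restricted appropriately, and the minimum-deadline one among those is $\pi$; finally, being pending at $t_{j-1}^+$ means precisely: pending at $t_{j-1}$ and not in $B_{j-1}$ (again using no adversary shipment in the window), which is the claimed characterization.

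The main obstacle I expect is the careful bookkeeping of the "half-open interval" timing: an order pending at $t_{j-1}$ versus at $t_{j-1}^+$, and the possibility that $\pi$ is a brand-new arrival rather than a leftover. The cleanest way around this is to lean hard on the phase definition (no adversary shipments in $(t_g,t_h]$) together with the infinitesimal-perturbation assumption that all arrival times and deadlines are distinct, which lets me say unambiguously that "the trigger of $B_j$" is well-defined and that no two candidate orders tie. A secondary subtlety is that $B_{j-1}$ removes retailers (not individual orders), so I should phrase the invariant in terms of retailers having \emph{no} pending order after a shipment they participate in — a shipment to retailer $\rho$ clears \emph{all} of $\rho$'s currently pending orders — and then translate back to orders at the end. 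With that framing the induction goes through routinely.
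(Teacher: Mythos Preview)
Your proposal has a genuine gap in part (a), and the overall induction scaffolding is unnecessary.

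The core issue is your argument that $\pi$ was pending at $t_{j-1}$. You argue: if some order pending at $t_{j-1}$ and not in $B_{j-1}$ had deadline smaller than $\pi$'s, that order would have triggered $B_j$ instead. True, but this does not rule out the scenario where $B_{j-1}$ happens to clear \emph{every} order pending at $t_{j-1}$ (so nothing is left over), and then $\pi$ arrives fresh in $(t_{j-1},t_j]$ with deadline $t_j$ and triggers $B_j$. Nothing in your argument excludes this, so you have not established that $\pi$ was already around at $t_{j-1}$.

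Relatedly, you repeatedly invoke phrases like ``cleared by the adversary'' and ``handled \ldots\ by an adversary shipment,'' as if adversary shipments affected which orders are pending for the algorithm. They do not: the algorithm's pending set is determined solely by the algorithm's own shipments. The adversary runs a separate schedule. This confusion is what makes your invariant and base case incoherent.

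The missing idea is precisely to use the adversary's feasibility constraint, but in the right direction. The order $\pi$ has deadline $t_j \le t_h$, so the adversary must ship to $\chi_j$ at some time in $[a,t_j]$. By the phase definition there are no adversary shipments in $(t_g,t_h]$, hence the adversary's shipment for $\pi$ occurs at or before $t_g$, which forces $a \le t_g \le t_{j-1}$. (With the distinctness assumption one sharpens this to $a < t_{j-1}$.) Since $\pi$ is still pending for the algorithm at $t_j$ (it is the trigger), it was certainly pending at $t_{j-1}$. Once you have this, part (b) is immediate: $t_j$ is the first deadline reached after $t_{j-1}$, so $\pi$ is the minimum-deadline order among those pending at $t_{j-1}$ and not served by $B_{j-1}$. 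No induction is needed; the paper's proof is three lines.
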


\begin{proof}
Suppose that $\pi  = (\chi_j,a,d)$, that is $d = t_j$. 
If we had $a > t_{j-1}$ then the adversary
would have to make a shipment in the interval $(a,d]$, but this would contradict the
definition of a phase. (Recall that, by our assumptions, the
adversary cannot ship at time $a$.) So $\pi$ was pending at time $t_{j-1}$.
It must also be in fact the earliest-deadline order outside $B_{j-1}$, because
$B_j$ is the first shipment after $B_{j-1}$.
\end{proof}


\begin{lemma}\label{lem: jrdp core shipments}
Let $[g,h]$ be a phase and $g < j \le h$.
Suppose that $\rho \in B_j$, where for $j=h$ we assume that $\rho = \chi_h$.
Let $j'< j$ be maximum such that $\rho\in B_{j'}$ (if there is
no shipment to $\rho$ before $t_j$, let $j' = 0$). 
Then the adversary must ship to $\rho$ in the interval
$(t_{j'},t_h]$, whence $j' < g$.
\end{lemma}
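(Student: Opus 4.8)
The plan is to track, for a fixed retailer $\rho$, the shipments of $\calG$ that include $\rho$ within the phase $[g,h]$ and show that each such shipment (other than possibly the last one to $\rho$) must be "covered" by a dedicated adversary shipment to $\rho$ lying strictly inside the phase. First I would argue that $\rho$ receives a shipment $(B_j,t_j)$ only because of an order $\pi'=(\rho,a',d')$ with $d' > t_j$ that is pending at time $t_j$ — if $\rho\in B_j$ via the "$X_j$ fill-up" rule it is because $\rho$'s deadline is among the smallest, so $\rho$ has a pending order; and if $\rho=\chi_h$ (the $j=h$ case) it is the trigger, so again it has a pending order with deadline $t_h$. I would then use \lref[Lemma]{lem: jrdp phase structure}-style reasoning to locate where this pending order was issued: since no adversary shipment occurs in $(t_g,t_h]$ and the adversary cannot ship exactly at an arrival time, the adversary must ship to $\rho$ in $(t_{j'},t_j]$ to satisfy every $\rho$-order whose arrival time exceeds $t_{j'}$; combining this with $t_j \le t_h$ gives an adversary shipment to $\rho$ in $(t_{j'},t_h]$.

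The key step in the order I would carry them out: (1) Establish that at time $t_j$ retailer $\rho$ has at least one pending order, handling the $j=h$ trigger case and the $j<h$ "batch fill" case separately. (2) Consider the most recent earlier shipment $j' < j$ with $\rho\in B_{j'}$ (or $j'=0$ via the dummy shipment convention, which by design leaves $\rho$ with no pending orders at time $0$). Every $\rho$-order pending at $t_j$ must have arrived after $t_{j'}$, since the shipment $(B_{j'},t_{j'})$ cleared all $\rho$-orders pending at that time. (3) Pick such a pending $\rho$-order $\pi'=(\rho,a',d')$ with $a' \in (t_{j'},t_j]$ and $d' \ge t_j$ (for $j<h$; for $j=h$ take $\pi=(\chi_h,a,t_h)$ the triggering order, so $d'=t_h$). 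The adversary must serve $\pi'$ by its deadline, hence must ship to $\rho$ at some time in $(a',d']\subseteq(t_{j'},t_h]$; since the adversary cannot ship at $a'$ itself, this shipment lies in $(t_{j'},t_h]$ with a strictly positive offset from $a'$, but in any case inside $(t_{j'},t_h]$. (4) Finally, because the phase $[g,h]$ has no adversary shipments in $(t_g,t_h]$, an adversary shipment to $\rho$ in $(t_{j'},t_h]$ forces $t_{j'} < t_g$, i.e.\ $j' < g$.

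The main obstacle I expect is step (3) together with the edge cases: I must be careful about which pending $\rho$-order to use and about the open/closed endpoints of the intervals, since the lemma statement is asymmetric ($j'$ may be $0$, and the $j=h$ case singles out $\rho=\chi_h$). In particular I want to make sure that when $j<h$ the order I choose really has deadline at least $t_j$ — this is where I lean on the fact that $\rho\in B_j$ was selected by the increasing-deadline fill-up rule, so $\rho$'s earliest pending deadline is $\ge$ the deadlines of retailers already placed, and in any event $\ge t_j$ is not quite automatic, so I would instead simply take $\rho$'s earliest pending order at time $t_j$, whose deadline is $\ge t_j$ by definition of "pending," and note its arrival is in $(t_{j'},t_j]$ as argued in step (2). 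The rest is bookkeeping with the phase definition, exactly parallel to the proof of \lref[Lemma]{lem: jrdp phase structure}.
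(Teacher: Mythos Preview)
There is a genuine gap in step (3). You assert that the adversary's shipment for $\pi'$ lies in $(a',d']\subseteq(t_{j'},t_h]$, but for $j<h$ you have only established $d'\ge t_j$, not $d'\le t_h$. Nothing you have written rules out that $\rho$'s pending order at time $t_j$ has a very late deadline $d'>t_h$; in that case the adversary could legally serve it after $t_h$, and your containment fails. In your ``obstacle'' paragraph you worry about the wrong inequality: you work to ensure $d'\ge t_j$ (which is automatic for a pending order) rather than $d'\le t_h$ (which is the nontrivial point).

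The missing ingredient is precisely the greedy fill-up rule, used in the opposite direction from how you invoke it. If $\rho=\chi_j$ then its deadline is $t_j\le t_h$ and you are done. If $\rho\in X_j$ with $j<h$, then by the rule every retailer in $X_j$ has deadline no larger than that of any retailer with a pending order outside $B_j$. Now apply \lref[Lemma]{lem: jrdp phase structure} to $j+1$ (legal since $g<j+1\le h$): the trigger order of $\chi_{j+1}$ is pending at $t_{j-1+1}=t_j$ and has deadline $t_{j+1}$, and $\chi_{j+1}\notin B_j$. Hence $\rho$'s deadline at $t_j$ is at most $t_{j+1}\le t_h$. This is what the paper's terse phrase ``by the algorithm'' is pointing at, and it is exactly why the hypothesis singles out $\rho=\chi_h$ when $j=h$ (so that the case $\rho\in X_h$, where no such ``next trigger'' is available, never arises). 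Once you have $d'\le t_h$, the rest of your outline goes through as written.
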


\begin{proof}
Right after the
shipment at time $t_{j'}$, there were no orders in $\rho$, so there
must be an order that arrived after $t_{j'}$ and has deadline
at most $t_h$, by the algorithm, and by the fact that $\rho\notin X_h$.
So the
adversary must ship in the time interval $(t_{j'},t_h]$.	This in turn
implies that $j' < g$, by the definition of phases.
\end{proof}

Consider a phase $[g,h]$.
Using the above lemma, if $\rho\in B_j$, where either $g\le j < h$ or $j=h$ and
$\rho = \chi_h$, then with the $\calG$'s warehouse-to-$\rho$ shipment at time $t_j$
we can associate a unique warehouse-to-$\rho$ shipment of the adversary 
that occurred not later than at time $t_h$.
With this in mind, we can now analyze Algorithm~$\calG$ using a charging
argument, as follows:
\begin{compactitem}
\item We charge $\cost(\chi_g)$, namely the cost of the warehouse-to-$\chi_g$ shipment
	at time $t_g$ to the associated warehouse-to-$\chi_g$ shipment of the adversary
	(as described above). The charging ratio here is $1$.
\item We charge $\COST + \cost(X_h)$, representing the cost of the first supplier-to-warehouse
shipment at time $t_g$ and the cost of the warehouse-to-$X_h$ shipment at time $t_h$, to
the adversary cost of $\COST$ of the supplier-to-warehouse shipment cost right before $t_g$.
Since $\cost(X_h) \le \COST$, the charging ratio is at most $2$.
\item For $j = g,...,h-1$, we charge the cost
$\COST + \cost(X_j) + \cost(\chi_{j+1})$, that represents the supplier-to-warehouse shipment cost
at time $t_{j+1}$ and the cost of shipments warehouse-to-$X_j$
and warehouse-to-$\chi_{j+1}$, to
$\cost(X_j) + \cost(\chi_{j+1})$, namely the adversary's warehouse-to-retailer shipment cost associated with
the retailers in $X_j\cup\braced{\chi_{j+1}}$. 
By the choice of $X_j$ and \lref[Lemma]{lem: jrdp phase structure},
we have $\cost(X_j) + \cost(\chi_{j+1}) > \COST$, so the charging ratio is at most $2$.
\end{compactitem}
In all cases the charging ratio is at most $2$, and different charges are
assigned to different portions of the adversary cost. Thus
Algorithm~$\calG$ is $2$-competitive. In \lref[Appendix]{sec: a lower bound 2 for online jrpd}
we show a matching lower bound. 

\begin{theorem}\label{thm: G is 2-competitive}
{\rm (a)}
Algorithm~$\calG$ is $2$-competitive for $\JRPD$.
{(\rm b)} Every deterministic online algorithm for $\JRPD$
has competitive ratio at least $2$.
\end{theorem}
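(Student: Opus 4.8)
\medskip\noindent\textbf{Proof plan.}

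Part~(a) needs nothing new. The phase decomposition together with Lemmas~\ref{lem: jrdp phase structure} and~\ref{lem: jrdp core shipments} assigns to every warehouse-to-$\rho$ shipment of $\calG$ inside a phase a distinct warehouse-to-$\rho$ shipment that the adversary must make no later than the close of that phase, and the three charges described just before the theorem --- the phase trigger $\chi_g$ at ratio $1$; the opening supplier shipment together with the closing batch $X_h$ at ratio $\le 2$ (via $\cost(X_h)\le\COST$); and each interior block $\COST+\cost(X_j)+\cost(\chi_{j+1})$ charged against $\cost(X_j)+\cost(\chi_{j+1})>\COST$ at ratio $\le 2$ --- land on pairwise disjoint pieces of the optimum. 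So for~(a) I would simply note that the preceding discussion is the proof.

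For part~(b) I plan an adversary argument, and I would first reduce, exactly as for $\JRPL$ in Section~\ref{sec: lower bound 2.754 for online jrlp}, to producing for every $\eps>0$ a finite $\JRPD$ gadget on which every deterministic online algorithm pays at least $(2-\eps)$ times the optimum; such gadgets concatenate, at geometrically separated scales, into one instance forcing a ratio arbitrarily close to $2$, with a single clean-up shipment at the end contributing only a negligible additive term.

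The gadget is a lose-lose dilemma iterated over many phases. At the start of a phase an imminent deadline of a ``trigger'' retailer $\chi$ with $\cost_\chi\approx\COST$ forces the algorithm to make a supplier shipment; at that same moment the adversary has placed fresh orders, at retailers of total cost $\approx\COST$, whose deadlines still lie ahead, and it watches what the algorithm bundles. If the algorithm pays these retailers ahead now, the adversary reveals that a shipment it forces shortly afterward would have covered them for free, so $\approx\COST$ was wasted; if the algorithm does not bundle them, the adversary tightens their deadlines and forces a second supplier shipment for them. Either way the algorithm spends $\approx 2\COST$ per phase, whereas the optimum, knowing the future, serves $\chi$ by folding it into the \emph{previous} phase's closing shipment and serves the fresh batch with the one extra shipment it would make anyway, i.e.\ $\approx\COST$ per phase. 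Balancing the two branches through the choice of retailer costs and the nesting of deadlines, and summing over phases, pushes the ratio to $2$, matching Theorem~\ref{thm: G is 2-competitive}(a).

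The genuinely delicate step is making the deadlines that \emph{force} the algorithm to ship not \emph{also} force the optimum to pay for those shipments: a forcing order has a narrow feasibility window, and a narrow window is exactly what prevents the optimum from bundling it away. So the phases must be nested --- the tight ``filler'' of one phase has to fall inside a time interval already covered by the optimum's previous-phase shipment --- and the geometric parameters must keep this nesting consistent across all phases while leaving the algorithm no escape, whether it ships much earlier or much later than the construction intends. Establishing that last ``no escape'' claim (the analogue of the threshold/$\sigma$ analysis in Section~\ref{sec: lower bound 2.754 for online jrlp}) is the other point requiring care.
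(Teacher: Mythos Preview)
Your treatment of part~(a) is correct: the preceding charging argument \emph{is} the proof, and you have summarized it accurately.

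For part~(b), your multi-phase adaptive construction is far more elaborate than necessary, and the ``delicate step'' you flag --- arranging forcing deadlines that the optimum can nevertheless bundle away --- is a self-inflicted difficulty that the paper sidesteps entirely. The paper (Appendix~\ref{sec: a lower bound 2 for online jrpd}) uses the same Single-Phase reduction as Section~\ref{sec: lower bound 2.754 for online jrlp}: all orders arrive at time $0$, and the only adversarial choice is the expiration time $\theta$. The instance has $\COST=1$, one free retailer $\rho_0$ with $\cost_{\rho_0}=0$, and $N$ retailers $\rho_1,\dots,\rho_N$ each with $\cost_{\rho_i}=1$; retailer $\rho_i$ has a single order with deadline $i$. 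The adversary simply stops the game the first time the algorithm bundles two retailers in one shipment; otherwise the game runs to the last deadline. If the algorithm ships each $\rho_i$ alone, it pays $1+2N$ while the optimum ships once at time $0$ for $1+N$, giving a ratio tending to $2$. If the algorithm first bundles at time $k$, it has paid $1+2(k-1)+3=2k+2$ while the optimum ships $\rho_0,\dots,\rho_k$ at time $0$ for $k+1$, giving ratio exactly $2$.

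The contrast with your plan is instructive. Because all orders share arrival time $0$, the optimum can bundle \emph{everything} into a single shipment at time $0$; there is no nesting of phases, no need for the optimum to ``fold $\chi$ into the previous phase's closing shipment'', and no tension between tight forcing windows and the optimum's ability to absorb them. The staircase of deadlines $0,1,2,\dots$ already forces the algorithm either to ship one retailer at a time (paying $\COST+\cost=2$ each) or to bundle, at which point the adversary ends the game. Your per-phase lose-lose dilemma --- where bundled shipments are made redundant by a later adversary shipment --- is unnecessary machinery here, and you have not actually carried out the balancing and nesting you acknowledge as delicate; as written, part~(b) is a plan rather than a proof, and it targets a harder construction than the problem requires.
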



\section{Final Comments}

There are still significant gaps between the lower and upper bounds for the approximability of different
variants of $\JRP$. For $\JRPD$, we have $\APX$-hardness 
\cite{jrp-deadlines-nonner,jrp-deadlines-icalp}
and an integrality gap of $1.245$ \cite{jrp-deadlines-icalp}, while the
best upper bound is $1.574$ \cite{jrp-deadlines-icalp}. 
The case of $\JRPL$ (linear waiting costs), although most natural, is poorly understood. The
best upper bound is the same as for the general case, namely $1.791$, shown in this paper,
even though at this time not even an approximation scheme has been ruled out.
Some progress on this problem was recently reported in~\cite{jrp-scheme-nonner-ipco13}.
The approximability of the online version of $\JRPL$ also remains open.

{\JRP} can be naturally generalized to trees of arbitrary depth. This multi-level $\JRP$
problem with deadlines was studied by Bechetti~{\etal}~\cite{packet-aggregation-becchetti}, 
who provided a $2$-approximation algorithm. Khanna~{\etal}~\cite{khanna-message-aggregation}
considered the case of linear waiting costs.
Very recently, Chaves (private communication) has shown that
the general case can be reduced to the so-called
multi-stage assembly problem, for which a $2$-approximation algorithm was
given by Levy~{\etal}~\cite{jrp-levi-2-approx}.


\newpage


\bibliographystyle{plain}
\bibliography{references}

\appendix


\section{A Lower Bound of 2 for Online {\JRPD}}
\label{sec: a lower bound 2 for online jrpd}


In this section we show that no online algorithm for $\JRPD$ can have competitive ratio smaller than $2$,
thus proving \lref[Theorem]{thm: G is 2-competitive}(b).
Similar to the proof in \lref[Section]{sec: lower bound 2.754 for online jrlp}, 
we actually provide this lower bound for the restricted
variant of $\JRPD$ called \emph{Single-Phase $\JRPD$}.
In Single-Phase $\JRPD$, all orders arrive at time $0$. The adversary
can stop the game at any time $\theta$ (the expiration time), unknown to the online algorithm. 
All orders not satisfied by time $\theta$ incur no cost.
By an argument similar to the one given in \lref[Section]{sec: lower bound 2.754 for online jrlp},
any lower bound for Single-Phase $\JRPD$ implies the same lower bound for $\JRPD$.
(For $\JRPD$ the argument in \lref[Section]{sec: lower bound 2.754 for online jrlp} 
has to be slightly refined; the details will be given in the final version of this paper.)

In our instance, the supplier-to-warehouse shipping cost is $\COST = 1$. 
We have $N+1$ retailers $\rho_i$, $i=0,1,...,N$, for some sufficiently large $N$.
Retailer $\rho_0$ has shipping cost $\cost_{\rho_0} = 0$, and each
retailer $\rho_i$, for $i>0$, has shipping cost $\cost_{\rho_i} = 1$.
For each $i$, retailer $\rho_i$ issues one order $\pi_i$ at time $0$ with
waiting cost function
\begin{equation*}
	h_i(t) = \left\{\begin{array}{lcl}
					0 &\quad& 0 \le t \leq i
					\\
					\infty && 	t > i
				\end{array}
				\right.
\end{equation*}
Let $\calA$ be an online algorithm for $\JRPD$.
$\calA$ must ship to each $\rho_i$ no later than at time $i$.
Without loss of generality, $\calA$ ships only at integer times, so
as long as $\calA$ ships to each retailer separately then each $\rho_i$
will be shipped at time $i$. 
The adversary will stop the game as soon as $\calA$ ships to more than one
retailer. If this does not happen, the game stops after the shipments to
all retailers, that is right after time $N$.

We argue now that this forces the competitive ratio of $\calA$ to be
arbitrarily close to $2$.
If $\calA$ ships to each retailer separately, it pays $1$ for retailer $0$ and
$2$ for each retailer $\rho_i$, $i\ge 1$, for the total cost of $2N+1$. 
The adversary can ship to all retailers at the
beginning, paying $N+1$. So the ratio approaches $2$ with $N\to\infty$.

Suppose that at some time $k$, $\calA$ ships to $\rho_k$ and some other retailer, and
let $k$ be the first such $k$.
Then $\calA$'s cost is $1 + 2(k-1) + 3 = 2k+2$. The adversary can ship to
all retailers $\rho_0,...,\rho_k$ at the beginning, paying $k+1$. So the
ratio is $2$.

\end{document}